\spnewtheorem{theo}{Theorem}{\bfseries}{\itshape}
\spnewtheorem{lemm}[theo]{Lemma}{\bfseries}{\itshape}
\spnewtheorem{cor}[theo]{Corollary}{\bfseries}{\itshape}
\spnewtheorem{defi}[theo]{Definition}{\bfseries}{\itshape}
\spnewtheorem{obs}[theo]{Observation}{\bfseries}{\itshape}
\tikzstyle{server} = [rectangle, draw = black, fill = white]
\tikzstyle{request} = [fill=black,circle,scale=0.5]
\tikzstyle{augment} = [thick]
\tikzstyle{alternative} = [thick, dashed, ->]
\tikzstyle{matching} = [thick, draw = black]
\tikzstyle{online} = [very thick, draw = utblue]
\tikzset{
	schraffiert/.style={pattern=north west lines,pattern color=#1},
	schraffiert/.default=black
}
\tikzset{snake it/.style={decorate, decoration={snake, segment length=2mm, amplitude = .5 mm}}}
\definecolor{utgreen}{RGB}{90,90,90}
\definecolor{SIjp1}{RGB}{26,49,250}
\definecolor{SIj}{RGB}{52,178,51}
\definecolor{utblue}{RGB}{0, 0, 0}
\definecolor{utorange}{RGB}{90, 90, 90}
\newcommand{\eps}[0]{\ensuremath{\varepsilon}\xspace}
\newcommand{\sym}[0]{\ensuremath{\oplus}\xspace}
\newcommand{\dist}[0]{\ensuremath{c}\xspace}
\newcommand{\I}[0]{\ensuremath{\mathrm{I}}\xspace}
\newcommand{\SI}[0]{\ensuremath{\bar{\mathrm{I}}}\xspace}
\newcommand{\level}[1]{\ensuremath{\ell(#1)}\xspace}
\newcommand{\abs}[1]{\ensuremath{|#1|}\xspace}
\newcommand{\Mf}[0]{\ensuremath{M^{\mathrm{F}}}\xspace}
\newcommand{\Mon}[0]{\ensuremath{M'}\xspace}
\newcommand{\Mof}[0]{\ensuremath{M^*}\xspace}
\newcommand{\Mopt}[0]{\ensuremath{M^\OPT}\xspace}
\newcommand{\OPT}[0]{\ensuremath{\mathrm{OPT}}\xspace}
\newcommand{\IR}[0]{\ensuremath{\mathbb{R}}\xspace}
\newcommand{\talg}[0]{\mbox{$t$-net-cost algorithm}\xspace}
\def\interval{aggregate search interval}
\begin{document}
\pagestyle{plain}
\title{
Online Minimum Cost Matching on the Line with Recourse}
%
%
\author{Nicole~Megow \and 
Lukas~N\"{o}lke} 
\authorrunning{N. Megow and L. N\"{o}lke}
%
\institute{Department for Mathematics and Computer Science, University of Bremen, Germany \email{\{nmegow,noelke\}@uni-bremen.de}}%
\maketitle              
\begin{abstract}
In online minimum cost matching on the line, $n$ requests appear one by one and have to be matched immediately and irrevocably to a given set of servers, all on the real line. The goal is to minimize the sum of distances from the requests to their respective servers. Despite all research efforts, it remains an intriguing open question whether there exists an $O(1)$-competitive algorithm. The best known online algorithm by Raghvendra~\cite{Raghvendra18} achieves a competitive factor of~$\Theta(\log n)$. This result matches a lower bound of~$\Omega(\log n)$ \cite{DBLP:conf/latin/AntoniadisFT18} that holds for a quite large class of online algorithms, including all deterministic algorithms in the literature. 

In this work we approach the problem in a recourse model where we allow to revoke online decisions to some extent. We show an $O(1)$-competitive algorithm for online matching on the line that uses at most~$O(n\log n)$ reassignments. This is the first non-trivial result for min-cost bipartite matching with recourse. For so-called alternating instances, with no more than one request between two servers, we obtain a near-optimal result. We give a~$(1+\eps)$-competitive algorithm that reassigns any request at most~$O(\eps^{-1.001})$ times. 
This special case is interesting as the aforementioned quite general lower bound~$\Omega(\log n)$  holds for such instances.
\end{abstract}
\section{Introduction}\label{sec:intro} 

Matching problems are among the most fundamental problems in combinatorial optimization with great importance in theory and applications. In the bipartite matching problem, there is given a complete bipartite graph~$G=(R\cup S,E)$ 
with positive edge cost
$c_e$ for $e\in E$. The elements of~$R$ and~$S$ are called \emph{requests} and \emph{servers}, respectively, with~$n \coloneqq |R| \leq |S|$. A~matching~$M \subseteq S\times R$ is called \emph{complete} if every request in~$R$ is matched to a server in~$S$. The task is to compute a complete matching of minimum cost, where the \emph{cost} of a matching~$M$ is~$\dist(M) \coloneqq \sum_{e\in M} \dist_e$. If all information is given in advance, the optimum can be computed efficiently, e.g. by using the Hungarian Method~\cite{Kuhn55thehungarian}.

In an \emph{online setting}, the set of requests~$R$ is not known a-priori. Requests arrive online one by one and have to be matched immediately and irrevocably to a previously unmatched server. As we cannot hope to find an optimal matching, we use standard competitive analysis to evaluate the performance of algorithms. An online algorithm is~\emph{$\alpha$-competitive} if it computes for any instance a matching~$M$ with~$\dist(M) \leq \alpha \cdot \OPT$, where~$\OPT$ is the cost of an optimal matching. It is known that for arbitrary edge costs, the competitive ratio of any online algorithm is unbounded~\cite{KalyanasundaramP93,KhullerMV94}. For metric cost, there is a deterministic~$(2n-1)$-competitive algorithm known and this is optimal for deterministic online algorithms~\cite{KalyanasundaramP93,KhullerMV94}. A remarkable recent result by Nayyar and Raghvendra~\cite{NayyarR17} is a fine-grained analysis of an online algorithm based on {\em $t$-net cost}~\cite{Raghvendra16} showing a competitive ratio of~$O(\mu(G) \log^{2} n)$, where~$\mu(G)$ is the maximum ratio of the minimum TSP tour, and the weighted diameter of $G$. 

The online matching problem has so far resisted all attempts for achieving an $O(1)$-competitive algorithm even for special metric spaces such as the line. In {\em online matching on the line} the edge-costs are induced by a line metric; i.e., we identify each vertex of~$G$ with a point on the real line and the cost of an edge between a request and a server equals their distance on the line. The competitive ratio of the aforementioned algorithm is then $O(\log^{2} n)$, as~$\mu(S)=2$. This has been improved to $\Theta(\log n)$~\cite{Raghvendra18}, which is best possible for a large class of algorithms~\cite{AntoniadisBNPS19}. It remains a major open question whether there exists an $O(1)$-competitive online algorithm (deterministic or randomized) 
on the line. 

In this paper, we consider online matching on the line with \emph{recourse}. In the recourse model, we allow to change a small number of past decisions, i.e., at any point, we may delete a set of edges~$(r_i,s_i)$ of the current matching and match the requests~$r_i$ to different (free) servers. Online optimization with recourse is an alternative model to standard online optimization which has received increasing popularity recently. Obviously, if the recourse is not limited then one can just simulate an optimal offline algorithm, and the online nature of the problem disappears. The challenging question for online matching on the line is whether one can maintain an~$O(1)$-competitive solution with bounded recourse. 

\medskip 
\noindent \textbf{Our results.} 	We give the first\footnote{We remark that after submission of our work, we learnt about a simultaneously submitted preprint with results on our problem of online min-cost matching with recourse on the line and in general metrics \cite{GuptaKS19}. Their work obtains similar and more general results using completely different techniques.} non-trivial results for the min-cost online matching problem with recourse. We show that with limited recourse, one can indeed maintain an~$O(1)$-competitive solution. 
\begin{theo}\label{thm:general}
	The online bipartite matching problem on the line admits a constant competitive algorithm with amortized recourse budget~$O(n \log n)$.
\end{theo}
Our algorithm builds on the~$t$-net-cost algorithm by Raghvendra~\cite{Raghvendra16,Raghvendra18}. It
has the nice property that it interpolates by parameter~$t$ between an~$O(\log n)$-competitive online solution (without recourse) and an~$O(1)$-approximate offline solution (with large recourse). To obtain an~$O(1)$-competitive algorithm with~$O(n \log n)$ recourse, we maintain a dynamic partition of the edge set and determine an online matching on one type of (active) edges and an \mbox{approximate} offline matching on the other type of (inactive) edges. To further limit the amount of recourse actions on inactive edges, we incorporate an edge freezing scheme. 

We investigate also the special class of instances, called \emph{alternating} instances, where between any two requests on the line there exists at least one server. This is an interesting class 
as the quite strong lower bound of~$\Omega(\log n)$ for all deterministic online algorithms {\em without recourse} studied in literature~\cite{DBLP:conf/latin/AntoniadisFT18}, holds even on alternating instances. For alternating instance, we present a more direct and near-optimal algorithm with a scalable performance-recourse tradeoff. 
\begin{theo}\label{thm:alternating}
 For alternating instances of online matching on the line, there is a~$(1+\eps)$-competitive algorithm that reassigns a request $O(\eps^{-1.001})$ times.
\end{theo}
While the algorithm is quite simple, the proof requires a novel charging scheme that exploits the special structure of optimal solutions on alternating instances. We observe that a large number of recourse actions for a specific request points to large edges in the optimal solution elsewhere on the line. 

As a byproduct we give a simple analysis of (a variant of) the algorithm in the traditional online setting without recourse. We show that it is $O(\log \Delta)$-competitive for alternating requests on the line, where $\Delta$ is some parameter depending on the cost-metric. 
This result compares to the competitive ratio $O(\log n)$ in~\cite{Raghvendra18} for the general line.

\medskip 
\noindent \textbf{Further related work.} Extensive literature is devoted to online bipartite matching problems. The maximum matching variant is quite well understood. For the unweighted setting optimal deterministic and randomized algorithms with competitive ratio $2$ and $e/(e-1)$ are known~\cite{KarpVV90}. The weighted maximization setting does not admit bounded guarantees in general, but intensive research investigates  models with additional assumptions; see, e.g., the survey~\cite{Mehta13}. The online min-cost matching problem is much less understood. 
It remains a wide open question whether a constant-competitive algorithm, deterministic or randomized, is possible for online min-cost matching on the line. In fact, the strongest known lower bound is $9+\eps$~\cite{DBLP:journals/tcs/FuchsHK05}. For a quite large class of algorithms, including all  deterministic ones in the literature, there is lower bound of~$\Omega(\log n)$~\cite{DBLP:conf/latin/AntoniadisFT18}.
 
Randomization in the algorithms decisions improves upon the best possible deterministic competitive ratio of~$(2n-1)$ for metric online bipartite matching~\cite{KalyanasundaramP93,KhullerMV94}; there is an~$O(\log^{2} n)$-competitive randomized algorithm~\cite{BansalBSN2014}. On the line, no such improvement on the deterministic result by randomization is known; the competitive factor of $O(\log n)$ is the best known result for both, deterministic and randomized, algorithms~\cite{Raghvendra18,GuptaL12}. 
	
Interestingly, when assuming randomization in the order of request arrivals (instead of an adversarial arrival order), the natural Greedy algorithm is~$n$-competitive~\cite{GairingK19} for general metric spaces. Interestingly, the online {\em $t$-net cost} algorithm is in that case $O(\log n)$-competitive~\cite{Raghvendra16}. Very recently, Gupta et al.~\cite{GuptaGPW19} gave an $O((\log\log \log n)^2)$-competitive algorithm in the model with online known i.i.d. arrivals.

Maintaining an online cardinality-maximal bipartite matching with recourse was studied extensively~\cite{BernsteinHR19,BosekLSZ14,ChaudhuriDKL09,GroveKKV95,AngelopoulosDJ18}. Bernstein et al.~\cite{BernsteinHR19} showed recently that the $2$-competitive Greedy requires amortized $O(n\log n^2)$ reassignments, which leaves a small gap to the lower bound of $\Omega(n \log n)$. 
We are not aware of any results for the min-cost variant, not even for matching on the line. It remained open whether recourse can improve upon the performance bound of~$O(\log n)$~\cite{Raghvendra18}.

The following two models address other types of matching with recourse. In a setting motivated by scheduling, several requests can be matched to the same server and the goal is to minimize the maximum number of requests assigned to a server. Gupta et al.~\cite{GuptaKS14} achieve an $O(1)$-competitive ratio with amortized $O(n)$ edge reassignments.
A quite different two-stage robust model has been proposed recently by Matuschke et al.~\cite{MatuschkeSV19}. In a first stage one must compute a perfect matching on a given graph and in a second stage a batch of~$2k$ new nodes appears which must be incorporated into the first-stage solution to maintain a low-cost matching by reassigning only few edges. For matching on the line, they give an algorithm that maintains a $10$-approximate matching reassigning $2k$ edges.

Recourse in online optimization has been investigated also for other min-cost problems even though less than for maximization problems. Most notably seems the online Minimum Steiner  tree problem \cite{GuG016,ImaseW91,MegowSVW16,LackiOPS2015}. Here, one edge reassignment per iteration suffices to maintain an $O(1)$-competitive algorithm~\cite{GuG016}, whereas the online setting without recourse admits a competitive ratio of $\Omega(\log n)$.

The recourse model has some relation to {\em dynamic algorithms}. Instead of minimizing the number of past decisions that are changed (recourse), the dynamic model focusses on the running time to implement this change ({\em update time}). 
A full body of research exists on maximum (weighted) bipartite matching; we refer to the nice survey in~\cite{DuanP14}. We are not aware of any results for min-cost matching.

\section{Preliminaries and overview of techniques}\label{sec:overview}

A common method for increasing the cardinality of an existing matching~$M$ is to augment along an alternating path~$P$. After augmentation, the resulting matching $\tilde M$ is given by the symmetric difference\footnotemark \mbox{$M \sym P$}.%
\footnotetext{For two sets~$X,Y\!$, their symmetric difference is given by~$X \sym Y \coloneqq (X\cup Y) \setminus (X\cap Y)$. For matchings~$M_1, M_2$, their symmetric difference~$ \;M_1 \sym M_2$ consists of disjoint alternating paths and cycles.} 
There might be a choice between different augmenting paths; typically, a path of minimum cost~(w.r.t.\ some metric) is selected. 
Recently, Raghvendra~\cite{Raghvendra16} introduced the following metric. 
For~$t>1$, the \emph{$t$-net-cost} of a path~$P$ with respect to a matching~$M$ is 
\begin{align}
\phi_t^{M}(P) \coloneqq \; t \cdot c(P\setminus M)  - c(P\cap M) \ = \ t \cdot c(P\cap \tilde M)  - c(P\cap M). \label{eq:tnetcost}
\end{align}

Our algorithm maintains three matchings at the same time: the \emph{recourse matching}~$M$, which is the actual output of the algorithm, and two auxiliary matchings based on (online and offline versions of) the \talg\cite{Raghvendra16}, namely, the \emph{offline matching}~$\Mof$ and the \emph{online matching}~$\Mon$. While~\Mof is a near-optimal offline matching that possibly requires a large number of recourse actions,~\Mon is an online matching 
that is only~$O(\log n)$-competitive~\cite{Raghvendra18} but uses no recourse.

Matchings \Mof and~\Mon are obtained based on the above cost function; see also\cite{Raghvendra16,NayyarR17,Raghvendra18}. Let a matching indexed by $i$ denote its state after serving the $i$-th request including possible reassignments. We refer to this as the state at \emph{time $i$}.

The offline \talg constructs~$\Mof_i$ by augmenting~$\Mof_{i-1}$ along an alternating path~$P_i$ of minimum $t$-net-cost w.r.t.\ $\Mof_{i-1}$, that is,~\mbox{$\Mof_{i} \coloneqq~\Mof_{i-1} \sym P_i$}. The path $P_i$ starts at~$r_i$ and ends at a free server,
denoted by~$s_i$. While this procedure may rematch requests very often, the cost of the resulting matching for any~$i$ and~$t>1$ satisfies~$\dist(\Mof_i) \leq t\cdot\OPT_i$ as shown in~\cite{Raghvendra16}.
Here, $\OPT_i \coloneqq c(\Mopt_i)$ denotes the cost of an optimal offline matching~$\Mopt_i$ for the first~$i$ requests.

The online \talg maintains $\Mof$ as auxiliary matching. It constructs the online matching~$\Mon_i$ by directly connecting the end points~$r_i$ and~$s_i$ of~$P_i$, where $P_i$ is the augmenting path of minimum $t$-net-cost w.r.t.~$\Mof_{i-1}$. That is, we set $\Mon_i \coloneqq~\Mon_{i-1} \cup \{(r_i,s_i)\}$ without any recourse.

It is interesting to note that the parameter $t$  allows to interpolate between different known algorithms. When~$t=1$, the offline \talg  is equivalent to the Hungarian Method~\cite{Kuhn55thehungarian} which computes an optimal solution. The corresponding online matching, however, can easily be seen to have a competitive ratio of $\Omega(n)$. In contrast, when~$t$ tends to infinity, the algorithms behavior resembles the greedy online algorithm which matches a request on arrival to the nearest free server. Its competitive ratio can even be exponential~\cite{KalyanasundaramP93}. Interestingly though, with~$t=3$, is has a competitive ratio of~$O(\log n)$~\cite{Raghvendra18}.

We exploit this fact in the design of an algorithm for the recourse model by constructing a matching that is carefully balanced between~\Mof and~\Mon. This allows us to simultaneously bound competitive ratio and number of reassignments.
When a request arrives, we match it as in~\Mon and locally 
group it with other recent requests.
This step does not require any recourse action 
and can be interpreted as an input-buffer for~\Mof. 
Once the total cost of a group of requests 
grows too large, we release the requests and 
re-match them as in~\Mof.
As this may require too many edge reassignments, we freeze certain low-cost edges and limit the resulting increase in the cost of other edges.

For alternating instances, we again maintain~\Mof and employ a simple freezing scheme.
After a request reaches a certain threshold of reassignments, we freeze this request and the currently associated matching edge.
We charge detours that are taken due to frozen edges
to large non-frozen edges of~\Mof.

\section{A Constant-Competitive Algorithm}\label{sec:generalALG}

Our algorithm classifies requests according to the structure of intervals representing the part of the line where, on arrival of a request, the \talg searches for a free server. 
Define the \emph{search interval} of a request~$r_i$ as the open interval~$\SI_i = (s_i^\mathrm{L},s_i^\mathrm{R})$, where~$s_i^\mathrm{L}$ and~$s_i^\mathrm{R}$ are virtual servers on the line (not necessarily in~$S$) farthest to the left and right of~$r_i$, respectively, that can be reached from~$r_i$ with~$t$-net-cost~$\phi_t^{\smash{\Mof_{i-1}}}(P_i)$. 
In other words,~$\SI_i$ is the convex hull of all points on the line, reachable from~$r_i$ with~$t$-net-cost (strictly) less than~$\phi_t^{\smash{\Mof_{i-1}}}(P_i)$.
We define the \emph{\interval} of~$r_i$ to be the maximal interval~$\I_i$ which contains~$r_i$ and is a subset of~$\bigcup_{j\leq i}\SI_j$; see Figure~\ref{fig:blocks} for an illustration.
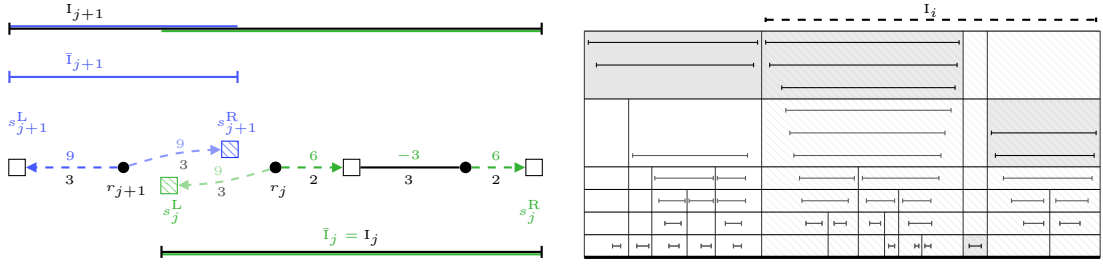
\begin{figure}[t]
	\centering
	\begin{tikzpicture}[xscale=.5, yscale = .8,>={Stealth[inset=0pt,length=4pt,angle'=50]}]	
		\foreach \x / \name in {1.2/A,10/B,14.8/C}{	
			\node[server, draw = black](\name) at (\x,0) {};
		}			
		\node[server, draw = utgreen!0, fill = SIj!50, schraffiert, pattern color=SIj!50, thin](V) at (5.2,-.3) {};	
		\node[server, SIj, thin, fill opacity = 0]() at (5.2,-.3) {};	
		\node[server, draw = utgreen!0, fill = SIjp1!50, schraffiert, pattern color=SIjp1!50, thin](V2) at (6.8,.3) {};	
		\node[server, SIjp1, thin, fill opacity = 0] at (6.8,.3) {};	
		\node[anchor = north ] () at (14.7,-0.4) {\color{SIj}\tiny$s_j^\mathrm{R}$};
		\node[anchor = south ] () at (1.5,0.4) {\color{SIjp1!80}\tiny$s_{j+1}^\mathrm{L}$};
		\node[anchor = south ] () at (7,.4) {\color{SIjp1!80}\tiny$s_{j+1}^\mathrm{R}$};
		\node[anchor = north ] () at (5.3,-0.4) {\color{SIj}\tiny$s_j^\mathrm{L}$};
		\foreach \x / \name in {8/a,13/b}{	
			\node[request, draw opacity = 0](\name) at (\x,0) {};
		}			
		\node[] () at (4,-.4) {\tiny~$r_{j+1}$};
		\node[] () at (2.5,-.2) {\tiny~$3$};
		\node[] () at (5.5,0.05) {\color{black!70}\tiny~$3$};	
		\node[] () at (6.5,-.4) {\color{black!70}\tiny~$3$};	
		\node[request,scale = .98](r) at (4,0) {};
		\node[request,scale = .98](c) at (8,0) {};	
		\draw[matching](b) -- (B);
		\draw[augment, dashed, SIj, thick, ->](a) -- (B);
		\draw[augment, dashed, SIj, thick, ->](b) -- (C);
		\draw[augment, SIj!50, dashed, thick, ->](a) to [bend left = 5] (V);
		\draw[augment, SIjp1!50, dashed, thick, ->](r) to [bend left = 5] (V2);
		\draw[augment, dashed, SIjp1!80, thick, ->](r) -- (A);
		\node[] () at (2.6,.2) {\color{SIjp1!80}\tiny$9$};
		\node[] () at (5.5,.4) {\color{SIjp1!50}\tiny$9$};	
		\node[] () at (6.5,-0.05) {\color{SIj!50}\tiny$9$};	
		\draw[SIjp1!80,thick](1,1.5) -- (7,1.5);		
		\draw[SIjp1!80,thick] (1,1.4) -- (1,1.6);
		\draw[SIjp1!80,thick] (7,1.4) -- (7,1.6);
		\node[] () at (3,1.75) {\color{SIjp1!80}\tiny$\SI_{j+1}$};
		\draw[SIjp1!80,thick](1,2.34) -- (7,2.34);	
		\draw[SIj, thick](5,2.26) -- (15,2.26);
		\draw[thick](1,2.3) -- (15,2.3);	
		\draw[thick] (1,2.2) -- (1,2.4);
		\draw[thick] (15,2.2) -- (15,2.4);
		\node[] () at (3,2.57) {\tiny$\I_{j+1}$};	
		\foreach \x / \labela / \labelb / \col in {9/6/2/SIj,11.5/-3/3/SIj,13.8/6/2/SIj}{	
			\node[] () at (\x,.2) {\color{\col}\tiny$\labela$};
			\node[] () at (\x,-.2) {\tiny$\labelb$};
		}	
		\node[] () at (8,-.4) {\tiny~$r_j$};
		\node[] () at (10,-1.15) {\color{SIj}\tiny$\SI_j=\color{black}\I_j$};
		\draw[SIj, thick](5,-1.44) -- (15,-1.44);
		\draw[thick](5,-1.4) -- (15,-1.4);
		\draw[thick] (5,-1.3) -- (5,-1.5);
		\draw[thick] (15,-1.3) -- (15,-1.5);
	\end{tikzpicture}
	\hfill 
	\begin{tikzpicture}[yscale = .6, xscale=.53]	
	\fill[utblue!10, thick] (3.3,3.75) rectangle (12.7,5.25);
	\fill[utblue!10, thick] (13.3,2.25) rectangle (16.1,3.75);
	\fill[utblue!10, thick] (12.7,0.25) rectangle (13.3,.75);	
	\fill[schraffiert, pattern color=black!5 , thick] (7.7,0.25) rectangle (16.1,5.25);
	\foreach \x / \lev  in {3.3/11, 4.4/8, 4.975/5, 5.85/4, 6.55/5, 7.7/11, 9.35/3, 10.1/5, 10.75/3, 11.1/4, 11.675/2 ,12.7/11, 13.3/11, 14.85/4, 16.1/11}{	
		\draw[utblue!80, very thin] (\x,0.25) -- (\x,\lev*.5-.25);
	}
	\draw[very thick] (3.297,0.25) -- (16.103,0.25);
	\foreach \a / \b / \lev / \col / \cov in {4/4.2/1/utgreen/7-, 4.6/4.9/1/utgreen/7-, 5.4/5.65/1/utgreen/1-, 6.2/6.45/1/utgreen/3-, 7/7.2/1/utgreen/7-, 10.85/11/1/utgreen/7-, 11.5/11.6/1/utgreen/7-, 11.75/11.9/1/utgreen/7-, 12.85/13.15/1/utblue/7-, 
		5.3/5.7/2/utgreen/2-, 6.8/7.2/2/utgreen/7-, 8.8/9.2/2/utgreen/7-, 9.5/9.8/2/utgreen/7-, 10.38/10.65/2/utgreen/7-, 11.35/11.92/2/utgreen/7-, 14.2/14.7/2/utgreen/7-, 15.1/15.6/2/utgreen/7-, 
		5.15/5.8/3/utorange/4-, 5.9/6.5/3/utorange/5-, 6.6/7.3/3/utorange/7-, 8.7/9.85/3/utorange/7-, 10.3/11/3/utorange/7-, 11.2/11.9/3/utorange/7-, 13.9/14.7/3/utorange/7-, 15/15.8/3/utorange/7-, 
		5.05/6.5/4/utorange/6-, 6.6/7.3/4/utorange/7-, 8.64/10/4/utorange/7-, 10.2/12.1/4/utorange/7-, 13.7/15.9/4/utorange/7-, 
		4.5/7.35/5/utorange/7-, 8.5/12.15/5/utorange/7-, 13.5/15.95/5/utblue/7-, 
		8.4/12.25/6/utorange/7-, 13.4/16/6/utblue/7-, 
		8.3/12.4/7/utorange/7-, 
		8.2/12.5/8/utblue/7-, 
		3.6/7.5/9/utblue/7-, 7.9/12.55/9/utblue/7-, 
		3.4/7.6/10/utblue/7-, 7.8/12.6/10/utblue/7-, 
		4/4.2/1/utgreen/7-, 4.6/4.9/1/utgreen/7-, 5.4/5.65/1/utgreen/1-, 6.2/6.45/1/utgreen/3-, 7/7.2/1/utgreen/7-, 10.85/11/1/utgreen/7-, 11.5/11.6/1/utgreen/7-, 11.75/11.9/1/utgreen/7-, 
		5.3/5.7/2/utgreen/2-, 6.8/7.2/2/utgreen/7-, 8.8/9.2/2/utgreen/7-, 9.5/9.8/2/utgreen/7-, 10.38/10.65/2/utgreen/7-, 11.35/11.92/2/utgreen/7-, 14.2/14.7/2/utgreen/7-, 15.1/15.6/2/utgreen/7- 
	}{	
		
		\draw[\col] (\a,\lev*.5) -- (\b,\lev*.5);
		\draw[thin, \col] (\a,\lev*.5 + .05) -- (\a,\lev*.5 -.05);
		\draw[thin, \col] (\b,\lev*.5 + .05) -- (\b,\lev*.5 -.05);
	}		
	\foreach \lev / \col  in {1/, 2/utgreen, 3/, 4/utnavy, 7/utnavy, 10/utnavy }{	
		\draw[thin, black] (3.3,\lev*.5 + .25) -- (16.1,\lev*.5 + .25);
	}
	\draw[thick, dashed, black] (7.8,11*.5) -- node[above = -2.5 pt] {\tiny$\I_i$} (16,11*.5);
	\draw[thick, black] (7.8,11*.5 - .05) -- (7.8,11*.5 + .05);
	\draw[thick, black] (16,11*.5 - .05) -- (16,11*.5 + .05);
	\end{tikzpicture}\hspace{.15 cm}
	\caption{Left: Construction of search 
	and \interval s of requests~$r_j$ and~$r_{j+1}$.
		Right: Typical block structure, $\I_j$ opens a new top block (active edges) and initiates a recourse step. Requests in the hatched area (inactive) are reassigned.
	}\label{fig:blocks}
	\vspace*{-2ex}
\end{figure}

\begin{obs} 
	Whenever~$i<j$, then either~$\I_i\cap \I_j = \emptyset$ or~$\I_i \subseteq \I_j$. \label{obs:laminar}
\end{obs}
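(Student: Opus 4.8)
The plan is to reinterpret each \interval\ $\I_i$ as the connected component of an \emph{increasing} family of open sets, so that the statement collapses to the standard fact that the components of nested sets form a laminar family. Set $U_i \coloneqq \bigcup_{k\le i}\SI_k$. By construction $r_i\in\SI_i\subseteq U_i$ (the minimum-net-cost augmenting path $P_i$ has positive $t$-net-cost for $t>1$, so points just to the left and right of $r_i$ are reachable with net cost strictly below $\phi_t^{\smash{\Mof_{i-1}}}(P_i)$), so the \interval\ of $r_i$ is well defined. Since the connected subsets of $\IR$ are exactly intervals, ``the maximal interval through $r_i$ contained in $U_i$'' is precisely the connected component of the open set $U_i$ that contains $r_i$; this is the description of $\I_i$ I would use throughout.

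First I would record the only structural input that is needed: monotonicity, $U_i\subseteq U_j$ whenever $i\le j$, which is immediate from the definition of $U_i$. Consequently $\I_i$, which is connected and contained in $U_i$, is in particular a connected subset of $U_j$, and hence is contained in exactly one connected component of $U_j$.

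Then I would finish with a two-line case distinction for $i<j$. If $\I_i\cap\I_j=\emptyset$ there is nothing to prove. Otherwise $\I_i$ and $\I_j$ are intervals with non-empty intersection, so $\I_i\cup\I_j$ is again an interval; it lies inside $U_j$ (because $\I_i\subseteq U_i\subseteq U_j$ and $\I_j\subseteq U_j$) and it contains $r_j$. By the maximality of $\I_j$ among subintervals of $U_j$ that contain $r_j$, this forces $\I_i\cup\I_j\subseteq\I_j$, that is $\I_i\subseteq\I_j$, which is the claim.

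I do not expect a genuine obstacle here: the argument is purely order-theoretic and topological and uses nothing about the line metric or about $t$-net-cost beyond the two facts above (each $\SI_k$ is an interval and the union $U_i$ is monotone in $i$). The only points that deserve a sentence of care are the identification of ``the maximal interval of $U$ through a point'' with ``the component of $U$ through that point'' (true since components are maximal connected sets and connected subsets of $\IR$ are intervals), and the well-definedness remark $r_i\in\SI_i$; both are handled in the opening paragraph.
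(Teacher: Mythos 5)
Your proof is correct, and it is essentially the argument the paper has in mind: the observation is stated without proof precisely because it follows from the maximality of $\I_j$ among intervals containing $r_j$ inside $\bigcup_{k\leq j}\SI_k$ together with the monotonicity of these unions, which is exactly the two-step case distinction you give.
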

We say an \interval~$\I_i$ has \emph{level}~$k$, if~$(1+\eps)^{k-1} \leq \abs{\I_i} < (1+\eps)^k$ and write~$\level{\I_i}=k$. Throughout the algorithm, we set~$t=3$ and~$\eps = \tfrac{1}{32t}$.
Further, we say that two \interval s of the same level which intersect with each other belong to the same \emph{block}.
If the \interval s of a block do not intersect those of higher level, then this block is said to be a \emph{top block}.
We~may say that a request~$r_i$ or the edge~$e_i = (r_i,s_i) \in M'$ is of level~$k$ or belongs to a certain block if this holds for the associated \interval.
A~typical block-structure is depicted in Figure~\ref{fig:blocks}.

Our algorithm crucially uses the structure imposed by the blocks and partitions the requests into three groups. 
The first group consists of all requests that belong to a top block. We label requests in this group \emph{active}. All other requests are \emph{inactive}. An active request~$r_i$ is matched exactly as in the online matching, i.e., it is matched in~$M_i$ via the edge~$e_i = (r_i,s_i) \in \Mon_i$ which connects the endpoints of the
minimum $t$-net-cost augmenting path w.r.t.\ $\Mof_{i-1}$. 

The second group of requests is also matched according to~\Mon.
It consists of requests~$r_i$, for which the corresponding edge~$e_i \in~\Mon$ is of negligible size.
We call such a request or edge \emph{frozen} and denote by~$\Mf \subseteq \Mon$ the set of frozen edges. The precise freezing scheme (Algorithm~\ref{alg:general}, Steps 1-3) is detailed below. 
It will ensure that frozen requests contribute at most \OPT to the total cost of~$M$ and, intuitively, it is not worth to spend recourse 
on them.

Ideally, we would like to match the third group, i.e.\ the remaining \emph{non-frozen, inactive} requests, as in the offline matching~\Mof. This may not be possible, as \Mof could assign a request~$r$ to a server $s$ that is already used by a frozen request. In such a case, we match~$r$ via a detour of low additional cost as follows; we call this {\em detour matching}. 
We consider the symmetric difference~$\Mof \sym \Mf$ which consists of alternating paths and cycles.
Since~$r$ is not frozen, there exists a (unique) alternating path, that starts with the edge~$(r,s)$ and ends at some server~$s'$ that is not yet matched to a frozen request.
In the recourse matching, we match~$r$ directly to this server~$s'$~(Alg.~\ref{alg:general}, Step~6).
For an illustration, see Fig.~\ref{fig:unfreezing}, left part. 

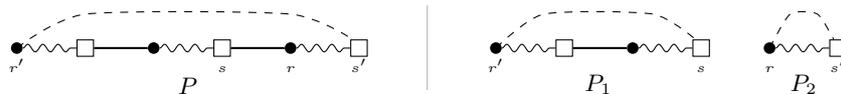
\begin{figure}[tbh]
	\vspace*{-2ex}
	\centering
	\begin{tikzpicture}[scale = .45]
	\node[]() at (-7,0) {	\begin{tikzpicture}[scale = .9, >={Stealth[inset=0pt,length=4pt,angle'=50]}]
		\draw[thin, dashed] plot [smooth, tension = .4] coordinates { (1,0) (2,.5) (5,.5) (5.95,0.1)};	
		\foreach \x / \name in {2/A,4/B,6/C}{	
			\node[server, fill= white, draw = black](\name) at (\x,0) {};
		}		
		\node[anchor = south ] () at (4,-0.48) {\tiny$s$};
		\node[anchor = south ] () at (6,-0.48) {\tiny$s'$};
		\node[anchor = south ] () at (3.5,-0.8) {\small$P$};
		\foreach \x / \name in {1/a,3/b,5/c}{	
			\node[request, draw opacity = 0](\name) at (\x,0) {};
		}			
		\node[anchor = south ] () at (1,-0.48) {\tiny$r'$};
		\node[anchor = south ] () at (5,-0.48) {\tiny$r$};
		\foreach \A / \B  in {a/A,b/B,c/C}{
			\draw[snake it, thin](\A) -- (\B);
		}	
		\foreach \A / \B  in {A/b,B/c}{
			\draw[thick](\A) -- (\B);
		}	
		\end{tikzpicture}};
	\draw[black!30, thin](0,-1) -- (0,1.5);
	\node[]() at (7,0) {	\begin{tikzpicture}[scale = .9, >={Stealth[inset=0pt,length=4pt,angle'=50]}]
		\draw[thin, dashed] plot [smooth, tension = .4] coordinates { (1,0) (1.7,.5) (3.3,.5) (3.95,0.1)};
		\draw[thin, dashed] plot [smooth, tension = .4] coordinates { (5,0) (5.3,.5) (5.7,.5) (5.95,0.1)};		
		\foreach \x / \name in {2/A,4/B,6/C}{	
			\node[server, fill= white, draw = black](\name) at (\x,0) {};
		}		
		\node[anchor = south ] () at (4,-0.48) {\tiny$s$};
		\node[anchor = south ] () at (6,-0.48) {\tiny$s'$};
		\node[anchor = south ] () at (2.5,-0.8) {\small$P_1$};
		\node[anchor = south ] () at (5.5,-0.8) {\small$P_2$};
		\foreach \x / \name in {1/a,3/b,5/c}{	
			\node[request, draw opacity = 0](\name) at (\x,0) {};
		}			
		\node[anchor = south ] () at (1,-0.48) {\tiny$r'$};
		\node[anchor = south ] () at (5,-0.48) {\tiny$r$};
		\foreach \A / \B  in {a/A,b/B,c/C}{
			\draw[snake it, thin](\A) -- (\B);
		}	
		\foreach \A / \B  in {A/b}{
			\draw[thick](\A) -- (\B);
		}	
		\end{tikzpicture}};
\end{tikzpicture}
\caption{Illustration of the third step in Algorithm~1. Edges of~\Mof are drawn as snaked lines, edges of~\Mf as solid lines and edges of~$M\setminus\Mf$ dashed.
	After unfreezing~$r$, the path in~$\Mof \sym \Mf$ is split in two parts by the removal of~$(r,s)$.}\label{fig:unfreezing}
	\vspace*{-3ex}
\end{figure}

It remains to specify how to obtain the desired matching $M$ described above. When a request arrives, it belongs immediately to a top block (active), by definition of \interval  s, and is thus assigned according to $M'$~(Algorithm~\ref{alg:general}, Step~7). In the course of the algorithm's execution new blocks may appear (on top) rendering previously active requests inactive.
This requires a batch of recourse actions to maintain~$M$.
Further, we describe the precise freezing scheme. Note that we cannot verify online whether an edge has cost $c_e\leq \text{OPT}/n$ since we do not know OPT or $n$. We need a dynamic freezing and unfreezing scheme depending on the known value OPT$_i$.
\begin{algorithm}[thb]
\caption{}\label{alg:general} 
\begin{algorithmic}[1]
	\Statex \textbf{On arrival} of the~$i$-th request~$r_i$: \vspace{6 pt}
	\Statex \textbf{Freezing/Unfreezing:}
	\Indent
	\State update the set~\Mf of frozen edges
	\For{edges that become unfrozen}
	\State repair assignments on corresponding path of~$\Mof \sym \Mf$
	\EndFor\vspace{4 pt}
	\EndIndent
	\Statex \textbf{Recourse step} (new top block):
	\Indent
	\If{there is no~$j < i$ such that~$\I_j \subseteq \I_i$ and~$\level{\I_j} = \level{\I_i}$}
	\For{non-frozen requests~$r \in \I_i$}
	\State reassign~$r$ according to~$\Mof_{i-1} \sym \Mf_{i-1}$
	\EndFor 
	\EndIf
	\EndIndent
	\State match~$r_i$ to~$s_i$, as in~$\Mon_i$
\end{algorithmic}
\end{algorithm}

\noindent \emph{Recourse step:} 
A new top block is created whenever a request~$r_i$ arrives and there is no~$j < i$ such that~$\I_j \subseteq \I_i$ and~$\level{\I_j} = \level{\I_i}$ hold. 
This triggers, what we call a \emph{recourse step} (Alg.~\ref{alg:general}, Steps~5-6) involving a number of recourse actions to accommodate the newly inactive requests. 
To that end, they and all other non-frozen, inactive requests that lie in~$\I_i$ are reassigned. 
They are matched according to~$\Mof_{i-1} \sym \Mf_{i-1}$ which we described as detour matching above; see Figure~\ref{fig:blocks}. 

\smallskip 
\noindent \emph{Freezing/Unfreezing}: 
When a new request $r_i$ arrives, we update the set of frozen edges as follows. If the cost of an edge~$e \in M'$ drops below~$\tfrac{\OPT_i}{i^2}$, we freeze this edge (and the corresponding request). 
The edge stays frozen until, at a later time~$j$, we have~\smash{$\dist_e > \tfrac{\OPT_j}{j}$} at which point we unfreeze it.
As the edge becomes now too costly to be matched as in~\Mon, we need to reassign it according to~$\Mof \sym \Mf$ as follows. 
Consider an edge~$e=(s,r)$ right before it is unfrozen. If~$e\in \Mof$, then there is nothing to do. Otherwise,
$e$ is an interior edge of some alternating path~$P$ in~$\Mof \sym \Mf$ that starts in a request~$r'$ and ends in a server~$s'$.
Unfreezing~$e$ and matching it according to~$\Mof \sym \Mf$ decomposes~$P$ into the~\mbox{$r'$-$s$-path~$P_1$} and the~$r$-$s'$-path~$P_2$.
In Algorithm~1 we implement these changes via two recourse actions: we reassign~$r$ to~$s'$ and~$r'$ to~$s$; see Figure~\ref{fig:unfreezing}.

\medskip 
\noindent 
\textbf{Bounding the competitive ratio.} We recall that $\Mof \leq t\cdot\OPT$~\cite{Raghvendra16}. Frozen edges satisfy \mbox{$c(\Mf) \leq \OPT$} as, after arrival of the last request, there are at most~$n$ frozen edges of cost at most $\frac{\OPT}{n}$ each. Therefore, the contribution of inactive non-frozen requests to the cost of~$M$ is at most~\mbox{$\dist(\Mof \sym \Mf) \leq (t+1)\cdot\OPT$}.
\begin{obs}\label{obs:inactiveEdges}
	The cost of inactive edges is at most~$(t+2)\cdot\OPT$.
\end{obs}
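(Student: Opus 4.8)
The plan is to split the inactive edges into the two groups the algorithm maintains for them --- the frozen edges $\Mf$ and the non-frozen \emph{detour} edges --- bound the total cost of each group separately, and add the two bounds. Note that each inactive request is matched in $M$ by exactly one edge, which lies in $\Mf$ if the request is frozen and is a detour edge otherwise, so this decomposition does not overcount.

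First I would bound $c(\Mf)$. By the unfreezing rule, any edge $e$ that is still frozen at a time $j$ must satisfy $c_e \le \OPT_j / j$ (otherwise it would have been unfrozen). Applying this at the final time $j=n$, and using that there are at most $n$ requests and hence at most $n$ frozen edges, gives $c(\Mf) \le n \cdot \OPT_n/n = \OPT$. (This silently uses $\OPT_i \le \OPT_{i+1}$, which holds since the restriction of an optimal matching for the first $i+1$ requests to the first $i$ of them is a feasible matching for those.)

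Next I would bound the detour edges. Each non-frozen inactive request $r$ is reassigned in $M$ to the server $s'$ that is the far endpoint of the unique alternating path $P_r$ in $\Mof \sym \Mf$ starting at the edge $(r,s) \in \Mof$; since $\Mof$ matches every request while $\Mf$ leaves $r$ unmatched, such a path exists and, being a connected component of $\Mof\sym\Mf$, it terminates at a server not used by any frozen edge, so the reassignment is feasible. By the triangle inequality on the line, $c(r,s') \le c(P_r)$. The paths $P_r$ for distinct non-frozen inactive requests are vertex-disjoint (distinct components of $\Mof\sym\Mf$), so summing yields $\sum_r c(r,s') \le c(\Mof\sym\Mf) \le c(\Mof) + c(\Mf)$. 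Invoking $c(\Mof) \le t\cdot\OPT$ from~\cite{Raghvendra16} together with the bound $c(\Mf)\le\OPT$ from the previous step, this is at most $(t+1)\OPT$. Adding the two contributions gives that the total cost of inactive edges is at most $\OPT + (t+1)\OPT = (t+2)\OPT$.

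The only place requiring genuine care is the disjointness/well-definedness argument for the detour paths: one must verify that every non-frozen request indeed starts a well-defined alternating path in $\Mof \sym \Mf$ ending at a server free of frozen edges, and that distinct non-frozen requests give distinct components, so that the per-path length bounds sum to $c(\Mof\sym\Mf)$ rather than overcounting shared edges. Everything else is a direct combination of the cited cost guarantee for $\Mof$ with the elementary estimate coming from the freezing threshold.
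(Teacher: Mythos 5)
Your proof is correct and follows essentially the same route as the paper: it bounds the frozen edges by $c(\Mf)\le n\cdot\OPT/n=\OPT$ via the unfreezing threshold at the final time, bounds the detour edges of non-frozen inactive requests by $\dist(\Mof\sym\Mf)\le c(\Mof)+c(\Mf)\le(t+1)\cdot\OPT$ using the triangle inequality and the disjointness of the alternating paths, and adds the two. The extra care you spend on well-definedness and disjointness of the detour paths is detail the paper leaves implicit, not a different argument.
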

To bound the cost of active edges, we build on the analysis of Raghvendra~\cite{Raghvendra18}.
We refine his technical propositions 
and perform 
a slightly more fine-grained analysis. 
	Instead of simultaneously bounding the cost of all blocks of the same level, we argue more generally on the cost of any set of disjoint blocks at different levels.
	In particular, we are interested in bounding the cost contribution of the top blocks.
\begin{lemm}\label{lem:activeEdges}
	For~$t=3$, the cost of all active edges of~$M$ is bounded by~$O(\OPT)$.
\end{lemm}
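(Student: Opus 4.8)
The plan is to reduce the lemma, via the laminar structure of the blocks, to bounding the total online cost carried by an \emph{arbitrary antichain} of blocks (rather than by all blocks of a fixed level), and then to obtain such a bound from a localised refinement of Raghvendra's propositions.

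First I would record three structural facts. (i) By Observation~\ref{obs:laminar} the aggregate search intervals, hence the blocks, form a laminar family, so the top blocks are pairwise disjoint sub-intervals of the line. (ii) Inside a top block $B$ of level $k$ the only active requests are those whose aggregate search interval has level exactly $k$: a request located in $B$ whose interval has smaller level lies, by laminarity, in a block properly contained in $B$, and that block---meeting the higher-level block $B$---is not a top block, so the request is inactive or frozen; likewise no request in $B$ has a higher-level interval. (iii) The endpoint $s_i$ of the minimum $t$-net-cost augmenting path $P_i$ is reached from $r_i$ at net cost $\phi_t^{\smash{\Mof_{i-1}}}(P_i)$, hence $s_i\in\SI_i\subseteq\I_i$ and $c(e_i)\le|\I_i|<(1+\eps)^{\level{\I_i}}$. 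Writing $c_B$ for the total cost of the active edges of a top block $B$, it then remains to show $\sum_B c_B=O(\OPT)$, where the sum runs over the pairwise disjoint top blocks.

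The heart of the argument is the refined claim: for \emph{any} family $\mathcal B$ of pairwise disjoint blocks, possibly at different levels, $\sum_{B\in\mathcal B}c_B=O(\OPT)$; Lemma~\ref{lem:activeEdges} is the instance $\mathcal B=\{\text{top blocks}\}$. Raghvendra's analysis bounds, for a single block $B$ of level $k$, the quantity $c_B$ by a constant times the amount of $\Mof$-cost that $B$ ``forces'' (recall $c(\Mof)\le t\cdot\OPT$): the large level of $B$ witnesses a local surplus of requests over free servers that $\Mof$, no less than the online algorithm, must pay for near $B$. He uses this after fixing one level $k$ and summing over the then-disjoint level-$k$ blocks, which costs $O(\OPT)$ per level and $O(\OPT\log n)$ over all $\Theta(\log n)$ levels, because along a nested chain of blocks a single cheap $\Mof$-edge can be ``forced'' once at each level. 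I would instead re-derive the per-block estimate while tracking \emph{where} the attributed $\Mof$-cost sits, so that the charge $c_B\mapsto\Mof$ is $O(1)$-to-one over disjoint blocks; summing over the antichain then yields $\sum_{B\in\mathcal B}c_B=O(c(\Mof))=O(\OPT)$.

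The main obstacle is exactly this localisation. In contrast to the per-level argument, which can afford to charge $c_B$ to $\Mof$-cost in a constant-factor enlargement of $B$ (and, via laminarity, spend $\Theta(\OPT)$ at each of $\Theta(\log n)$ scales), collapsing the bound to a single $O(\OPT)$ over an antichain requires re-proving Raghvendra's propositions in a form that ties the online cost of a level-$k$ block to $\Mof$-cost attributable to that block in an $O(1)$-to-one fashion---essentially, $\Mof$-cost inside $B$ rather than merely in its neighbourhood. This is where the geometric spacing $(1+\eps)^k$ of the levels and the laminarity of the $\I_i$ must be used together, and where one must control the fact that the scale-$(1+\eps)^k$ surplus seen by the online algorithm is matched by $\Mof$-cost that does not leak too far outside $B$; here the choice $t=3$ (already fixed with $\eps=\tfrac1{32t}$) enters, as it is what makes the $t$-net-cost structure behave well in these estimates. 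Once the localised proposition is established, the summation over the top-block antichain, and hence the lemma, is routine.
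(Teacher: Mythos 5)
Your high-level strategy coincides with the paper's: the top blocks are pairwise disjoint, and one argues that the active cost of \emph{any} family of disjoint blocks, of possibly different levels, is $O(\OPT)$, generalizing Raghvendra's per-level bound. However, your proposal stops exactly where the proof has to start. The ``localised proposition'' -- that the online cost of a block can be charged $O(1)$-to-one to \Mof-cost sitting inside that block -- \emph{is} the content of the lemma, and you neither prove it nor give a concrete route to it; you name it as ``the main obstacle'' and then declare the remaining summation routine. As written this is an unproven assertion, and it is not even clear that your particular formulation (charging each block's cost to \Mof-cost strictly inside the block) is the right target.

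The paper fills this gap with specific ingredients absent from your plan: (a) it verifies that the \interval s $\I_i$ coincide with Raghvendra's intervals of the cumulative search region, so that his block machinery applies at all; (b) Lemma~\ref{lem:ConsecutivePaths}, a generalization of the long-versus-short-edge inequality to an arbitrary consecutive window $P_j,\ldots,P_k$ of augmentations, which is what allows one to analyze a block that does not start from an empty matching, at the price of an additive $O(\OPT_{j-1})$ term; (c) an arrival-order permutation argument -- a request's arrival only affects the matching inside its \interval, so requests lying above the chosen blocks can be postponed and discarded -- which is what lets each block $B_i$ be treated as a separate instance $\mathcal{B}_i$; and (d) Lemma~\ref{lem:boundBlocks}, the observation that Raghvendra's Lemma~13 uses only disjointness of blocks, not equality of levels, giving $\sum_i \OPT_{\mathcal{B}_i} \leq 2t\cdot\OPT$ for any disjoint family. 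Steps (b) and (c) bound each block's online cost by $O(\tfrac{1}{\eps})\cdot\OPT_{\mathcal{B}_i}$, and (d) sums this over the top blocks. None of these appears in your proposal, so the charging scheme you would need -- whether it stays inside the block, does not leak, and is $O(1)$-to-one -- is left entirely open; this is a genuine gap, not a presentational one.
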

\begin{cor}\label{cor:comp}
	Algorithm~1 has a constant competitive ratio.
\end{cor}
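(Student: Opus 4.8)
The plan is to combine the two preceding statements, which already carry all the weight. Fix any point in time, say after the~$i$-th request (in particular, the situation after the last request). By definition each of~$r_1,\dots,r_i$ is classified either as \emph{active}, if it currently belongs to a top block, or as \emph{inactive} otherwise, and these cases are mutually exclusive and exhaustive. Since~$M=M_i$ matches every request to exactly one server, this partitions the edges of~$M$ into a set~$M^{\mathrm{act}}$ of active edges and a set~$M^{\mathrm{inact}}$ of inactive edges, so that~$\dist(M)=\dist(M^{\mathrm{act}})+\dist(M^{\mathrm{inact}})$. It thus suffices to bound each of the two terms by~$O(\OPT_i)$.

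The inactive term is bounded by Observation~\ref{obs:inactiveEdges}: $\dist(M^{\mathrm{inact}})\le (t+2)\cdot\OPT_i$. (Recall the reasoning: the frozen edges contribute~$\dist(\Mf)\le\OPT_i$, since by the (un)freezing rule there are at all times at most as many frozen edges as requests, each of current cost at most~$\OPT_i/i$; the remaining inactive requests are matched by the detour matching along~$\Mof\sym\Mf$ and hence contribute at most~$\dist(\Mof\sym\Mf)\le\dist(\Mof)+\dist(\Mf)\le t\cdot\OPT_i+\OPT_i$, where~$\dist(\Mof)\le t\cdot\OPT_i$ is the guarantee of the offline \talg~\cite{Raghvendra16}.) The active term is bounded by Lemma~\ref{lem:activeEdges}: $\dist(M^{\mathrm{act}})=O(\OPT_i)$. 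Summing and inserting~$t=3$ gives~$\dist(M)=O(\OPT_i)+5\cdot\OPT_i=O(\OPT_i)$ at every time~$i$, which is the claimed constant competitive ratio.

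Along the way two small points are used but cause no difficulty: that~$M$ is at every time a complete matching of the requests seen so far --- immediate from Algorithm~\ref{alg:general}, since a new request is matched to the free server~$s_i$ used by~$\Mon$ and every recourse or unfreezing step only reroutes requests along alternating paths of~$\Mof\sym\Mf$, which preserves completeness --- and that Observation~\ref{obs:inactiveEdges} and Lemma~\ref{lem:activeEdges} hold at each time and not only after the last request, which is already incorporated in their statements. The real difficulty lies entirely in Lemma~\ref{lem:activeEdges}: obtaining the bound~$O(\OPT)$ for all top blocks \emph{simultaneously}, rather than the~$O(\log n)\cdot\OPT$ that a level-by-level application of Raghvendra's analysis would yield, needs the refined, block-local versions of his technical propositions together with the laminarity of the aggregate search intervals (Observation~\ref{obs:laminar}), which renders the top blocks pairwise disjoint.
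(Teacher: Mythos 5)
Your proposal is correct and follows essentially the same route as the paper: the corollary is obtained by splitting~$\dist(M)$ into inactive edges, bounded by~$(t+2)\cdot\OPT$ via Observation~\ref{obs:inactiveEdges} (frozen edges cost at most~$\OPT$, detour-matched requests at most~$\dist(\Mof\sym\Mf)\le(t+1)\cdot\OPT$), and active edges, bounded by~$O(\OPT)$ via Lemma~\ref{lem:activeEdges} for~$t=3$. Your additional remarks on completeness of~$M$ and validity at every time step are consistent with the paper's (implicit) argument and introduce no gap.
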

We outline the proof of Lemma~\ref{lem:activeEdges}, omitting technical details. To this end, we use a refined version of the analysis in~\cite{Raghvendra18} adapted to our notion of blocks.
For a full account, see~\cite{Raghvendra18} with this outline as reference.
While our definition of search intervals is intuitive and useful (specifically for the proof of Lemma~\ref{lem:freeze} later on), it describes intervals different from the search intervals in~\cite{Raghvendra18}.
However, we show that our definition of \interval s coincides with Raghvendra's definition of \emph{intervals of a cumulative search region}, which we denote here by~$C$. 

Both intervals~$C$ and~$\I$ are constructed the same way, that is, by taking the union of all known search intervals (for the respective definition) and choosing of the resulting new intervals the one which contains the considered request.
We describe the definition of search intervals in~\cite{Raghvendra18}, which, to avoid confusion, we call \emph{dual intervals}.
The~$t$-net-cost algorithm maintains dual values \mbox{$y: S \cup R \rightarrow \IR^+$} satisfying~$y_s + y_r = \dist_{(s,r)}$ if~$(s,r)\in\Mof$ and~$y_s + y_r \leq t\cdot \dist_{(s,r)}$ otherwise.
Additionally, duals of free requests or servers are zero.
When a request~$r_i$ arrives, a shortest~$t$-net-cost path~$P_i$ is found and the duals of all vertices in the search tree (denoted by~$A_i \subseteq S$ and~$B_i \subseteq R$) are updated before augmentation so that the dual constraints on~$P_i$ (and also on~$P_i^\mathrm{L}$/$P_i^\mathrm{R}$) are tight.
A dual interval of a request~$r_i$ is defined as~${interior}(\bigcup_{r\in B_i} cspan(r,i))$, 
where~$cspan(r,i)= \smash{[r - \tfrac{y_{\max}^i(r)}{t},r + \tfrac{y_{\max}^i(r)}{t}]}$ and~$y_{\max}^i(r)$ is the highest dual weight assigned to~$r$ until time~$i$.

Raghvendra~\cite{Raghvendra16} shows, that the dual constraints on~$P_i^\mathrm{L}$ and~$P_i^\mathrm{R}$ are tight before augmentation.
Therefore, we obtain our search intervals~$\I_i$ by replacing~$y_{\max}^i(r)$ with the dual weight of~$r$ before augmentation along~$P_i$. 
Therefore, clearly~$\I_j \subseteq C_j$.
At the same time, dual weights of requests are increased when the request is contained in some~$B_i$ as detailed above or reduced right after an augmentation.
Thus, the maximal value~$y_{\max}^i(r)$ is attained right before an augmentation, in which case the request is part of some~$P_h^\mathrm{L}$ or~$P_h^\mathrm{R}$, implying~$C_j \subseteq \I_j$.
Hence the intervals $\I_i$ and $C_i$ coincide and we may use the respective results from~\cite{Raghvendra18}.

An edge~$e_i$ of the online matching~$\Mon$ is called \emph{short}, if the corresponding augmenting path~$P_i$ satisfies~$\ell(P_i) \le \tfrac{4}{t-1}\phi_t^{{i-1}}(P_i)$ and \emph{long} otherwise.
Raghvendra bounds the 
cost of long edges by that of short ones.
We give a more general statement. 
\begin{lemm}\label{lem:ConsecutivePaths} 
	Let~$P_j,P_{j+1},\ldots,P_k$ be augmenting 
	paths that are consecutively used in Algorithm~1 and~$e_j,e_{j+1},\ldots,e_k$ the corresponding edges in~\Mon. Then
	\begin{align}
	\smash[t]{\sum_{\smash{i:\,e_i\textnormal{ long}}}} \dist(e_i) \leq \smash[t]{ \big(3+\tfrac{4}{t-1}\big)\cdot\sum_{\smash{i:\,e_i\textnormal{ short}}} \dist(e_i) + 4\cdot \OPT_{j-1}.}
	\end{align}
\end{lemm}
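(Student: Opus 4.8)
The plan is to carry out a localized version of Raghvendra's charging argument~\cite{Raghvendra18,Raghvendra16}. Throughout write $a_i \coloneqq \dist(P_i\setminus\Mof_{i-1})$ and $b_i \coloneqq \dist(P_i\cap\Mof_{i-1})$, so that $\ell(P_i)=a_i+b_i$ and $\phi_t^{i-1}(P_i)=t\,a_i-b_i=(t+1)a_i-\ell(P_i)$. Two elementary facts hold for every $i$: since the cost is a line metric and $P_i$ joins the endpoints of $e_i=(r_i,s_i)$, the triangle inequality gives $\dist(e_i)\le\ell(P_i)$; and since $s_i$ is free, the single edge $(r_i,s_i)$ is itself an augmenting path with respect to $\Mof_{i-1}$ of $t$-net-cost $t\cdot\dist(e_i)$, so minimality of $P_i$ yields $\phi_t^{i-1}(P_i)\le t\cdot\dist(e_i)$.

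First I would restate the short/long dichotomy in terms of $a_i,b_i$: substituting $\phi_t^{i-1}(P_i)=(t+1)a_i-\ell(P_i)$ into the definition shows $e_i$ is short iff $b_i\le\tfrac{3t+1}{t+3}\,a_i$ and long iff $b_i>\tfrac{3t+1}{t+3}\,a_i$. Hence for a long edge $a_i<\tfrac{t+3}{3t+1}\,b_i$, and therefore $\dist(e_i)\le\ell(P_i)<\tfrac{4(t+1)}{3t+1}\,b_i$; for a short edge, the two elementary facts together with $\ell(P_i)\le\tfrac{4}{t-1}\phi_t^{i-1}(P_i)$ bound $a_i$ (and $\ell(P_i)$) by a fixed constant times $\dist(e_i)$.

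The core step is a telescoping accounting on the offline matching that uses only that $P_j,\dots,P_k$ are consecutive. From $\dist(\Mof_i)-\dist(\Mof_{i-1})=a_i-b_i$ one gets
\[
\sum_{i=j}^{k} b_i \;=\; \sum_{i=j}^{k} a_i + \dist(\Mof_{j-1}) - \dist(\Mof_k) \;\le\; \sum_{i=j}^{k} a_i + \dist(\Mof_{j-1}),
\]
i.e.\ every matching edge destroyed during steps $j,\dots,k$ was either already present in $\Mof_{j-1}$ or created during steps $j,\dots,k$. Splitting $\sum a_i$ by whether $e_i$ is short or long and using $\sum_{e_i\text{ long}}a_i<\tfrac{t+3}{3t+1}\sum_{e_i\text{ long}}b_i$, one rearranges this to bound $\sum_{e_i\text{ long}}b_i$ by a constant times $\bigl(\sum_{e_i\text{ short}}a_i+\dist(\Mof_{j-1})\bigr)$; the inequality $\tfrac{3t+1}{t+3}>1$ (valid for $t>1$) is exactly what prevents the recursion from cycling, even though a long path can both destroy and create matching edges. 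Combining this with $\sum_{e_i\text{ long}}\dist(e_i)<\tfrac{4(t+1)}{3t+1}\sum_{e_i\text{ long}}b_i$, then replacing $\sum_{e_i\text{ short}}a_i$ by a constant times $\sum_{e_i\text{ short}}\dist(e_i)$ and $\dist(\Mof_{j-1})$ by $t\cdot\OPT_{j-1}$~\cite{Raghvendra16}, yields the claimed inequality; the term $4\cdot\OPT_{j-1}$ is the only residue of the augmentations preceding time $j$.

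I expect the main obstacle to be quantitative rather than conceptual: each estimate above must be done at full strength so that the constants collapse to exactly $3+\tfrac{4}{t-1}=\tfrac{3t+1}{t-1}$ and $4$ for $t=3$, whereas crude versions (e.g.\ $a_i\le\ell(P_i)$ used everywhere, or $\dist(\Mof_{j-1})\le t\cdot\OPT_{j-1}$ used bluntly) lose constant factors. This is precisely where we invoke Raghvendra's refined technical propositions~\cite{Raghvendra18} --- notably the decomposition of $P_i$ into its leftward and rightward parts $P_i^{\mathrm L},P_i^{\mathrm R}$ and the control on dual weights that this affords, both of which are special to the line --- now applied to the sub-run $P_j,\dots,P_k$ with $\Mof_{j-1}$ in the role of an ``initial'' matching rather than to the whole request sequence. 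Verifying that these propositions localize cleanly and that the boundary contribution is absorbed into $4\cdot\OPT_{j-1}$ is the delicate part.
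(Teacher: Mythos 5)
Your proposal takes essentially the same route as the paper: Lemma~\ref{lem:ConsecutivePaths} is obtained by localizing Raghvendra's long-versus-short charging argument (Lemma~2 of~\cite{Raghvendra18}) to the consecutive sub-run $P_j,\dots,P_k$, and the only genuinely new ingredient --- in your write-up as in the paper --- is the telescoping of $\dist(\Mof_i)-\dist(\Mof_{i-1})=a_i-b_i$ over $i=j,\dots,k$ combined with $\dist(\Mof_{j-1})\le t\cdot\OPT_{j-1}$ and $\dist(\Mof_k)\ge\OPT_k\ge\OPT_{j-1}$, which is exactly where the additive $O(\OPT_{j-1})$ term arises (here $a_i=\dist(P_i\setminus\Mof_{i-1})$, $b_i=\dist(P_i\cap\Mof_{i-1})$ as in your notation). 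Your preliminary facts are all correct: $\dist(e_i)\le\ell(P_i)$, $\phi_t^{i-1}(P_i)\le t\cdot\dist(e_i)$ since the single edge $(r_i,s_i)$ is itself an augmenting path, and the reformulation of short/long as $b_i\le\tfrac{3t+1}{t+3}a_i$ versus $b_i>\tfrac{3t+1}{t+3}a_i$; your telescoped inequality is the same identity the paper imports from~\cite{Raghvendra18}, namely $\tfrac{t-1}{2}\sum_i\ell(P_i)=\sum_i\phi_t^{i-1}(P_i)-\tfrac{t+1}{2}\bigl(\dist(\Mof_k)-\dist(\Mof_{j-1})\bigr)$. The one misstep is your closing paragraph: the constants do not come from the line-specific apparatus you invoke ($P_i^{\mathrm L},P_i^{\mathrm R}$, dual weights) --- in the paper that machinery only enters the search-interval/block analysis around Lemma~\ref{lem:activeEdges} and Lemma~\ref{lem:freeze}, while the present lemma, like Lemma~2 of~\cite{Raghvendra18}, is purely metric-free accounting. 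In fact your two elementary facts already close the quantitative gap once combined tightly: for every $i$, $\phi_t^{i-1}(P_i)-\tfrac{t-1}{2}\ell(P_i)=\tfrac{t+1}{2}(a_i-b_i)\le\tfrac{t+1}{2}\cdot\tfrac{1}{t}\phi_t^{i-1}(P_i)\le\tfrac{t+1}{2}\dist(e_i)$, and feeding this into the summed identity for the short terms, using $\phi_t^{i-1}(P_i)<\tfrac{t-1}{4}\ell(P_i)$ for the long ones and $\dist(\Mof_{j-1})-\dist(\Mof_k)\le(t-1)\OPT_{j-1}$, yields $\sum_{i:\,e_i\ \text{long}}\dist(e_i)\le\tfrac{2(t+1)}{t-1}\sum_{i:\,e_i\ \text{short}}\dist(e_i)+2(t+1)\cdot\OPT_{j-1}$, i.e.\ a bound of exactly the claimed shape with comparable constants (for $t=3$ the multiplicative constant is even better than $3+\tfrac{4}{t-1}$); matching the paper's precise constants is then just a matter of following the remaining steps of~\cite{Raghvendra18} verbatim, which is what the paper's own sketch does. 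As literally written, your chain of estimates (e.g.\ bounding $a_i$ for short edges via $\ell(P_i)\le\tfrac{4}{t-1}\phi_t^{i-1}(P_i)\le\tfrac{4t}{t-1}\dist(e_i)$) loses constant factors, so sharpen the accounting rather than appealing to the dual-weight propositions, which would not help here.
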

\begin{proof}[sketch]
	Raghvendra~\cite[Proof of Lemma 2]{Raghvendra18} shows (with~$j=1$ and~$k=n$) that
	$
	\tfrac{t-1}{2}\sum_{i=j}^k \ell(P_i) =
	\sum_{i=j}^k\phi_t(P_i) - \tfrac{t+1}{2} \sum_{i=j}^k(\dist(\Mof_i) - \dist(\Mof_{i-1})). 
	$
	We obtain the statement of the lemma by slightly altering the subsequent steps of the proof.
	Using~$\OPT_{j-1} \leq \OPT_k \leq \dist(\Mof_k)$ and~$\dist(\Mof_{j-1}) \leq t\cdot  \OPT_{j-1}$, we get 
	\begin{align*}
	\textstyle\tfrac{t-1}{2}\sum_{i=j}^k \ell(P_i) \leq (t-1) \cdot  \sum_{i=j}^k\phi_t(P_i) + (t-1)\cdot\OPT_{j-1} .
	\end{align*}
	Continuing along the lines of the proof \cite{Raghvendra18}, one easily obtains the statement.
	\qed  
\end{proof}

In~\cite[Lemma~13]{Raghvendra18} level-$k$ blocks are considered as separate instances.
Denote by~$\mathcal{B}_i$ the instance corresponding to block~$B_i$.
It consists of the requests~$r_j \in B_i$ as well as the corresponding servers~$s_j$. 
The sum of costs of the individual optimal solutions can be bounded 
by~$2t\cdot \OPT$. 
Since the proof 
only uses the disjointness of level-$k$ blocks, we may use the verbatim proof to obtain the following.
\begin{lemm}\label{lem:boundBlocks}
	For 
	disjoint blocks $B_1,B_2,\ldots,B_h$, we have
	$\sum_{i=1}^{h} \OPT_{\mathcal{B}_i} \leq 2t \cdot \OPT$.
\end{lemm}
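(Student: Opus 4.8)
The plan is to reuse the proof of \cite[Lemma~13]{Raghvendra18} essentially word for word; the only thing one has to check is that it never exploits the \emph{common level} of the blocks, but only their pairwise disjointness. First some notation: put $S'\coloneqq\{s_1,\dots,s_n\}$. Each augmenting path $P_j$ ends at the free server $s_j$ and thereby turns it into a matched server, so after all $n$ requests both \Mon and \Mof are perfect matchings between $R$ and $S'$; moreover $\dist(\Mof)\le t\cdot\OPT$ by \cite{Raghvendra16}. For a block $B_i$ let $J_i$ be the union of the \interval s of its requests; by Observation~\ref{obs:laminar} these \interval s are nested, so $J_i$ is again an interval, and since $s_j\in\SI_j\subseteq\I_j$ it contains every request $r_j\in B_i$ together with its online partner $s_j$, hence $R_i\cup S_i\subseteq J_i$. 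As each request lies in exactly one block, the sets $R_i$ (and likewise the $S_i$) are pairwise disjoint, and the hypothesis of the lemma is precisely that $J_1,\dots,J_h$ are pairwise disjoint as well.

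For each $i$ I would build a feasible matching $N_i$ of $\mathcal{B}_i$ out of \Mof and then charge its cost to \Mof, exactly as in \cite{Raghvendra18}. Keep every edge $(r_j,\Mof(r_j))$ with $r_j\in R_i$ and $\Mof(r_j)\in S_i$. The requests of $R_i$ and the servers of $S_i$ not yet matched this way are equinumerous and all lie in $J_i$; match them to one another in the order of their positions on the line. A standard line-metric argument bounds the cost of this monotone re-matching by the total length, measured inside $J_i$, of the \Mof-edges that cross the boundary $\partial J_i$ (an \Mof-edge contributes either through a request of $R_i$ whose partner escapes $J_i$, or through a server of $S_i$ whose partner escapes $J_i$). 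Now sum over $i$. The kept edges are, for distinct $i$, disjoint subsets of \Mof; and since the $J_i$ are pairwise disjoint intervals and each \Mof-edge is charged through at most one $R_i$ and at most one $S_i$, the charged parts are pairwise disjoint sub-segments of \Mof-edges. Hence
\begin{align*}
\sum_{i=1}^{h}\OPT_{\mathcal{B}_i}\ \le\ \sum_{i=1}^{h}\dist(N_i)\ \le\ 2\cdot\dist(\Mof)\ \le\ 2t\cdot\OPT .
\end{align*}

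Since this derivation uses nothing about the blocks beyond the pairwise disjointness of the intervals $J_i$ — in particular not that they share a level — it coincides with the proof of \cite[Lemma~13]{Raghvendra18}, which we may therefore simply invoke. The one genuinely delicate point, and the one I expect to be the main obstacle, is the bookkeeping hidden in the ``standard line argument'': an \interval of higher level may straddle $\partial J_i$, so $S_i$ need not be all of $S'\cap J_i$, and consequently the re-matched requests of $R_i$ have to be separated according to whether their \Mof-partner leaves $J_i$ to the left, leaves it to the right, or stays inside $J_i$ but outside $S_i$, the three cases being charged separately. This is exactly the case analysis already carried out in \cite{Raghvendra18}; reproducing it here would add nothing, so I would keep the proof at the level of the reduction above and cite the computation.
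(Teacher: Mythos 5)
Your proposal is correct and follows essentially the same route as the paper: the paper's own proof of this lemma consists precisely of invoking the argument of \cite[Lemma~13]{Raghvendra18} verbatim after observing that it uses only the pairwise disjointness of the blocks (not their common level), which is exactly your reduction, with your construction of feasible block matchings from \Mof and the factor~$2\cdot\dist(\Mof)\leq 2t\cdot\OPT$ matching the cited computation. The delicate charging case you flag (an \Mof-partner staying inside $J_i$ but outside $S_i$) is likewise left to the citation by the paper, so deferring it is consistent with the paper's level of detail.
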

\begin{proof}[sketch]
	Along the lines of~\cite[page~19]{Raghvendra18}, we obtain in our setting
	$$\sum_{i=1}^{h} \dist(\Mon_{\mathcal{B}_i}) \leq O(\tfrac{1}{\eps}) \cdot \sum_{i=1}^{h} \OPT_{\mathcal{B}_i} \leq O(\tfrac{1}{\eps}) \cdot \OPT.$$
	In contrast to~\cite{Raghvendra18}, we do not consider all edges of~\Mon at the same time and require the use of Lemma~\ref{lem:ConsecutivePaths}.
	To this end, we need to argue that it is possible to assume that only requests in blocks below $B_1,B_2,\ldots,B_h$ appeared so far.
	Since the arrival of a request~$r_h$ only influences the matching on the~$\I_h$-portion of the line, we may alter the arrival order via any permutation~$\pi$ that satisfies:~$\I_i \subseteq \I_j$ and~$i < j$ imply~$\pi(i) < \pi(j)$.
	Thus, we can assume that requests above $B_1,B_2,\ldots,B_h$ arrive last.
	In fact, we may discard such requests altogether and consider the intermediate matching before their arrival, as all results for the final matching are true for intermediate ones.
	The additional additive factor which we incurred in Lemma~\ref{lem:ConsecutivePaths} is accounted for in the~$O(\tfrac{1}{\eps})$\nobreakdash-term.
	\qed
\end{proof}

Applying Lemma~\ref{lem:boundBlocks} to the top blocks of the recourse matching, implies Lemma~\ref{lem:activeEdges}.

\medskip 
\noindent \textbf{Bounding the recourse.}\label{sec:freezingScheme}
%
Freezing a request $r$ does not cause any reassignments.
Therefore, unfreezing it can cause reassignments (Step~3) at most once between two consecutive recourse steps that involve~$r$.
We charge both recourse actions for unfreezing~$r$ (described above, see Fig.~\ref{fig:unfreezing}) to~$r$ itself. It suffices to bound the number of recourse steps in which~$r$ can be involved.
\begin{obs}\label{obs:ChargeUnfreezingToRecourseSteps}
	Each request $r$ is reassigned at most three times per recourse step that involves~$r$.
\end{obs}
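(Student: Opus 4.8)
The plan is to fix a request~$r$ and to bound the number of times $r$ is reassigned during the iteration in which a recourse step is executed (the arrival of some $r_i$ that creates a new top block) and that involves~$r$, i.e.\ with $r\in\I_i$. Only two lines of Algorithm~\ref{alg:general} ever reassign a request: Step~6, the detour matching performed during a recourse step, and Step~3, the repair of an alternating path of $\Mof\sym\Mf$ that is carried out when a frozen edge is unfrozen (the freezing in Step~1 itself causes no reassignment, as noted above). I will argue that in one iteration Step~6 reassigns $r$ at most once and Step~3 reassigns $r$ at most twice.

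For Step~6 this is immediate: the loop runs once over the non-frozen requests of $\I_i$ and moves each of them exactly once. For Step~3, recall the effect of unfreezing an edge: if the edge lies in $\Mof$ nothing happens, and otherwise the unique alternating path of $\Mof\sym\Mf$ through it is split into two parts, and the two request-endpoints thereby created are reassigned to new servers --- namely the request $r_e$ of the unfrozen edge, and the $\Mf$-free request-endpoint $r'$ of the other part. Hence $r$ can be moved by an unfreezing only in one of two roles: as the request $r_e$, which forces the unfrozen edge to be $r$'s own edge $e_r=(r,s_r)\in\Mon$, or as a free endpoint $r'$. The first role is assumed at most once in the iteration, because $\Mon$ is never modified, so $e_r$ is a fixed edge and leaves the frozen set $\Mf$ at most once (and while $e_r\in\Mf$ the request $r$ is $\Mf$-matched and cannot play the second role at all).

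The remaining point is to bound the number of times $r$ plays the endpoint role $r'$, and here I would fix the order in which the newly unfrozen edges on a common alternating path are processed: from the $\Mf$-free endpoint of that path inwards (equivalently, remove all newly unfrozen edges from $\Mf$ simultaneously and re-derive the detour matching). With this convention, once $r$ has become $\Mf$-free --- either because it was non-frozen to begin with, or because $e_r$ was just unfrozen --- $r$ is the free endpoint of exactly one new alternating path, and only the unfreezing of the frozen edge closest to $r$ on that path can move $r$ again; afterwards $r$ sits on a segment containing no further frozen edge and is untouched. Thus Step~3 moves $r$ at most twice, which together with the single move of Step~6 gives the asserted bound of three (and only two when $r$ was non-frozen). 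I expect the last step to be the only real obstacle: without committing to such a processing order, a chain of frozen edges on one $\Mof\sym\Mf$-path could a priori be swept repeatedly across $r$, so the observation really rests on treating the repair of each affected path coherently.
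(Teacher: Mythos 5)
There is a genuine gap, and it stems from reading the observation as a literal per-iteration bound rather than as the amortized charging statement the paper intends (and that Lemma~\ref{lem:recourse} actually uses). In the paper's accounting every recourse action is \emph{charged} to a request: the Step-6 reassignment of $r$ during a recourse step is charged to $r$, and \emph{both} actions triggered by unfreezing an edge $(r,s)$ --- reassigning $r$ to $s'$ and reassigning the other path endpoint $r'$ to $s$ (Fig.~\ref{fig:unfreezing}) --- are charged to the unfrozen request $r$, not to $r'$. Since freezing causes no reassignment, $r$'s own edge is unfrozen at most once between two consecutive recourse steps involving $r$, so $r$ is charged at most $1+2=3$ actions per such recourse step; this is exactly the form needed for the statement ``it suffices to bound the number of recourse steps in which $r$ is involved.''

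Your argument instead counts the physical moves of $r$ during the single arrival at which a recourse step involving $r$ is executed. Even granting your processing-order convention, this leaves uncounted all reassignments of $r$ occurring at other arrivals: the freezing/unfreezing block (Steps 1--3 of Algorithm~\ref{alg:general}) runs at \emph{every} arrival, and $r$ can be moved as the free endpoint $r'$ of the repair path each time some other frozen edge on its $\Mof \sym \Mf$-path is unfrozen --- potentially once per such edge over the whole run, with no recourse step involving $r$ in between. Consequently your per-iteration bound does not combine with Lemma~\ref{lem:freeze} and Lemma~\ref{lem:recourse} to yield the $O(n\log n)$ recourse budget. The missing idea is precisely the charging of an unfreezing's two actions to the request being unfrozen; under that bookkeeping your worry about a chain of frozen edges being swept across $r$ evaporates (each sweep is paid for by a different unfrozen request, whose own unfreezings are spaced by Observation~\ref{obs:frozen}), and no processing-order convention or modification of the algorithm is required.
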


Recall, that for a request that is frozen at time~$i$ until it becomes unfrozen at time~$j$, we have~$\dist_e \leq \tfrac{\OPT_i}{i^2}$ and~$\dist_e > \tfrac{\OPT_j}{j}$.
Noting the fact that~$\OPT_i \leq \OPT_j$ whenever~$i \leq j$, this leads to the following easy observation.

\begin{obs} 
An edge frozen at time~$i$ stays frozen until time at least~$i^2$. \label{obs:frozen}
\end{obs}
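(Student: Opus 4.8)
The plan is to chase the two defining inequalities of the freezing scheme through the monotonicity of $\OPT$. Suppose an edge $e=(r,s)\in\Mon$ becomes frozen at time~$i$; by the freezing rule this happens because $\dist_e \le \tfrac{\OPT_i}{i^2}$ at that moment. The edge remains frozen until the first time $j>i$ at which the unfreezing condition $\dist_e > \tfrac{\OPT_j}{j}$ is triggered. Since $\dist_e$ does not change while the edge is frozen (it is matched as in $\Mon$, which uses no recourse on it), the value $\dist_e$ that satisfies the freezing bound is the same value that must later satisfy the unfreezing bound.

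First I would combine the two facts: at time~$i$ we have $\dist_e \le \tfrac{\OPT_i}{i^2}$, and at the unfreezing time $j$ we have $\dist_e > \tfrac{\OPT_j}{j}$. Chaining these gives $\tfrac{\OPT_j}{j} < \dist_e \le \tfrac{\OPT_i}{i^2}$, hence $\tfrac{\OPT_j}{j} < \tfrac{\OPT_i}{i^2}$, i.e.\ $i^2 \cdot \OPT_j < j \cdot \OPT_i$. Now I would invoke monotonicity of the optimum: since $i \le j$ and requests only get added, $\OPT_i \le \OPT_j$ (this is the ``easy observation'' already noted in the text, $\OPT_i \le \OPT_j$ whenever $i \le j$). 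Substituting $\OPT_i \le \OPT_j$ into $i^2\cdot\OPT_j < j\cdot\OPT_i$ yields $i^2 \cdot \OPT_j < j \cdot \OPT_j$, and cancelling the positive factor $\OPT_j$ (the optimum is positive once any request has arrived, as edge costs are positive) leaves $i^2 < j$. Therefore the unfreezing time $j$ satisfies $j > i^2$, which is exactly the claim that the edge stays frozen until time at least~$i^2$.

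The only point requiring a little care is the edge case where $\OPT_j = 0$; but this cannot occur once a request is present, since all edge costs are strictly positive and a complete matching must use at least one edge, so $\OPT_j > 0$ and the cancellation is legitimate. There is no real obstacle here — the statement is a direct arithmetic consequence of the two threshold inequalities together with monotonicity of $\OPT$, which is why the paper flags it as an observation; I would present it in two or three lines exactly as above.
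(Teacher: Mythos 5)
Your argument is correct and is exactly the paper's reasoning: the paper derives the observation directly from the two threshold inequalities $\dist_e \leq \tfrac{\OPT_i}{i^2}$ (freezing) and $\dist_e > \tfrac{\OPT_j}{j}$ (unfreezing) together with the monotonicity $\OPT_i \leq \OPT_j$, which is precisely your chaining and cancellation step. Nothing further is needed.
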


On the other hand, the number of recourse steps in which a continuously non-frozen edge takes part in can be bounded from above.
\begin{lemm}\label{lem:freeze}
If a request~$r\in\I_i\subseteq\I_j$ is not frozen from time~$i$ to~$j$, then~$\level{\I_{j}}-\level{\I_{i}} = O(\log j)$. 
In particular, the number of recourse steps, that $r$ participates in between time~$i$ and time~$j$ is  bounded by $O(\log j)$.
\end{lemm}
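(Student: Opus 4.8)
The plan is to show that if $r \in \I_i \subseteq \I_j$ stays non-frozen on $[i,j]$, then the level can grow by at most $O(\log j)$ between these times; the statement about recourse steps then follows from Observation~\ref{obs:laminar}, since each recourse step that touches $r$ strictly increases the level of the \interval\ containing $r$ (a new top block is created at level $\ell(\I_i)$ only if no lower $j'<i$ has an \interval\ of the same level nested inside $\I_i$, so consecutive recourse steps involving a fixed non-frozen request happen at strictly increasing levels), and there are at most $\ell(\I_j)-\ell(\I_i)+1$ such levels. So the whole lemma reduces to bounding $\ell(\I_j)-\ell(\I_i)$.

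First I would translate "not frozen" into a quantitative bound on the matching edge $e=e(r)$ of $r$. By the freezing rule, as long as $r$ is never frozen during $[i,j]$ we must have $c_e \geq \tfrac{\OPT_m}{m^2}$ at every time $m\in[i,j]$ — otherwise $e$ would have been frozen. In particular, taking $m=j$, $c_e \geq \tfrac{\OPT_j}{j^2}$. (One has to be slightly careful: freezing is checked on arrival of each new request, and the edge $e$ itself may change across recourse steps, but $r$ is continuously matched via \emph{some} edge and the argument should be applied to whichever edge is current; the cleanest route is to observe that at any time $m$ where $r$ sits at level $\ell(\I_m)$, the edge of $r$ in $M'$ has length comparable to $|\I_m|$ from below — this is exactly the kind of relation the search-interval machinery gives, and it is where I expect the main work.) Concretely, the key geometric fact I would extract from the definition of $\SI_m$ and $\I_m$ is a \emph{lower} bound $c_{e(r)} \geq \Omega(\eps)\cdot|\I_m|$ whenever $r$ lies in a level-$\ell(\I_m)$ block at time $m$: the request $r$ reached its server across a $t$-net-cost augmenting path whose length relates to the width of its search interval, and the search intervals of a block cannot exceed $(1+\eps)$ times the spanning \interval's length, so a request in a wide \interval\ must pay proportionally. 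Combining the lower bound at time $j$ with a trivial upper bound $c_{e(r)} \le \OPT_i \cdot \text{poly}$ (indeed $c_{e(r)} \le c(M_i) = O(\OPT_i)\le O(\OPT_j)$ by Observation~\ref{obs:inactiveEdges} and Lemma~\ref{lem:activeEdges}, or even more crudely $c_{e(r)}\le c(\Mon_j)$) yields
\[
(1+\eps)^{\ell(\I_j)-1}\le |\I_j| \le O\!\left(\tfrac{1}{\eps}\right) c_{e(r)} \le O\!\left(\tfrac{1}{\eps}\right)\!\cdot O(\OPT_j),
\qquad
(1+\eps)^{\ell(\I_i)}> |\I_i| \ge \Omega(\eps)\, c_{e(r)} \ge \Omega(\eps)\cdot \tfrac{\OPT_j}{j^2},
\]
so that $(1+\eps)^{\ell(\I_j)-\ell(\I_i)} \le O(\eps^{-2} j^2)$, hence $\ell(\I_j)-\ell(\I_i) = O(\log_{1+\eps}(\eps^{-2}j^2)) = O(\log j)$ since $\eps=\tfrac{1}{32t}$ is an absolute constant.

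The step I expect to be the genuine obstacle is the lower bound $c_{e(r)} = \Omega(\eps|\I_m|)$ — relating the length of a request's matching edge to the width of the \interval\ it currently lives in. The upper side ($|\I| = O(\eps^{-1} c_e)$) and everything downstream is bookkeeping with geometric series and the laminar structure. For the lower side I would argue as follows: $\I_m$ is covered by search intervals $\SI_h$ of requests $r_h$ at the same level (a block), each of width $<(1+\eps)^{\ell(\I_m)}$, and $\SI_h$ is exactly the set of points reachable from $r_h$ with $t$-net-cost below $\phi_t(P_h)$; the server $s_h$ of $r_h$ sits (just) outside $\SI_h$, so $c_{e(r_h)} = c_{(r_h,s_h)}$ is at least a $\Theta(\eps)$-fraction of the width of $\SI_h$, and since the block's \interval\ has width within a $(1+\eps)$-factor of the widest $\SI_h$ in it, we get the claimed proportionality for the request realizing that width; the same bound for \emph{every} request in the block follows because all its search intervals have width within a constant factor of each other (they are at the same level and pairwise intersecting, so within $(1+\eps)^2$). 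Assembling these local estimates carefully — in particular handling the "$+1$" slack in the level definition and the distinction between $\SI$ and $\I$ — is the only place real care is needed; the rest is the two displayed inequalities above.
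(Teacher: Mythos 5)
Your skeleton is the same as the paper's: lower-bound $\abs{\I_i}$ by the online edge $e$ of $r$ (both endpoints lie in $\I_i$), translate non-frozenness into $\dist_e > \OPT_j/j^2$, upper-bound $\abs{\I_j}$ against $\OPT_j$, take logarithms base $(1+\eps)$, and get the recourse-step count from laminarity (recourse steps involving $r$ occur at pairwise distinct levels between $\level{\I_i}$ and $\level{\I_j}$) --- that last part is fine and matches the paper. The genuine gap is exactly the step you flag as the obstacle: the claimed fact $\dist_{e} \geq \Omega(\eps)\cdot\abs{\I_m}$, equivalently $\abs{\I_j} \le O(\eps^{-1})\dist_{e}$, is false, and your sketched justification does not work. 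The server $s_h$ of a request $r_h$ need not sit near the boundary of $\SI_h$: the $t$-net-cost \emph{subtracts} the matching edges a path uses, so the search region can extend far beyond $s_h$ along long edges of $\Mof_{h-1}$ even when the chosen free server is right next to $r_h$; hence $\abs{\SI_h}$ is not $O(1)\cdot\dist_{(r_h,s_h)}$. Likewise, requests lying in the same block do not have comparable search intervals: a request's own level is fixed at its arrival, while the aggregate interval around it can later climb through many levels with its online edge unchanged --- which is precisely the situation the lemma addresses. A sanity check that the claim is too strong: if it held, cancelling $\dist_e$ in your two chains would give $\level{\I_j}-\level{\I_i} \le O(\log_{1+\eps}\eps^{-2}) = O(1)$ independently of $j$, i.e., essentially constant recourse per non-frozen stretch, far beyond the lemma and close to the question the paper explicitly leaves open.

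The paper closes this step without any per-request proportionality. It bounds each \emph{individual} search interval against the optimum: $\dist(P_k)=\dist(P_k\cap\Mof_{k-1})+\dist(P_k\cap\Mof_k)\le 2t\cdot\OPT_k$, and the opposite-side path $P_k^{\mathrm{R}}$ has length at most $3t\cdot\OPT_k$ because it has the same $t$-net-cost as $P_k$; hence $\abs{\SI_k}\le 5t\cdot\OPT_k\le 5t\cdot\OPT_j$ for all $k\le j$, and summing, $\abs{\I_j}\le\sum_{k\le j}\abs{\SI_k}\le 5t\cdot j\cdot\OPT_j$. The polynomial factor $j$ is harmless, since it only costs $O(\log j)$ levels; combined with $\abs{\I_i}\ge\dist_e>\OPT_j/j^2$ this gives $\abs{\I_j}/\abs{\I_i}<5t\cdot j^3$ and the lemma follows. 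So the repair is to replace your ``key geometric fact'' by this aggregate bound on $\abs{\I_j}$. A minor additional point: your fallback upper bounds on $\dist_e$ are shaky --- $\dist(\Mon_j)$ is only $O(\log n)\cdot\OPT$, and $e$ need not belong to $M_i$ when $r$ is inactive; the clean bound is $\dist_e\le\dist(P_r)\le 2t\cdot\OPT_j$ via the augmenting path, but it is not needed once the aggregate bound on $\abs{\I_j}$ is in place.
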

\begin{proof}
Consider the search interval~\smash{$\SI_{j}=[s_j^\mathrm{L},s_j^\mathrm{R}]$}.
By definition, there exist augmenting paths~$P_j^\mathrm{L},P_j^\mathrm{R}$ from~$r_{j}$ to~$s_j^\mathrm{L}$ and~$s_j^\mathrm{R}$ 
with the same~$t$-net-cost as~$P_{j}$, the 
path used by the \talg. 
We bound the length of~$P_{j} \in \{P_j^\mathrm{L},P_j^\mathrm{R}\}$~by
\begin{equation}\label{eq:BoundAugmPath}
\dist(P_{j}) = \dist(P_{j} \cap \Mof_{j -1}) + \dist(P_{j} \cap \Mof_{j}) \leq 2t \cdot \OPT_j.
\end{equation}
Without loss of generality, assume~\smash{$P_{j} = P_{j}^\mathrm{L}$}. We 
could augment 
$\Mof_{j -1}$ also along~$P_{j}^\mathrm{R}$ obtaining a different matching~$\tilde{M}^*_{j}$.
By definition of $P_j^\mathrm{R}$ and~\eqref{eq:tnetcost},~$$t\cdot\dist(P_j \cap \Mof_{j}) - \dist(P_j \cap \Mof_{j-1})= \smash{\phi_t^{\Mof_{j -1}}(P_j) = \phi_t^{\Mof_{j -1}}(P_j^\mathrm{R})} = t\cdot\dist(P_j \cap \tilde\Mof_{j}) - \dist(P_j \cap \Mof_{j-1}).$$ Thus,
\begin{align*}
	\dist(P_{j}^\mathrm{R}) &= \dist(P_{j}^\mathrm{R} \cap \Mof_{j -1}) + \dist(P_{j}^\mathrm{R} \cap \tilde{M}^*_{j}) \\ &\leq \dist(P_{j}^\mathrm{R} \cap \Mof_{j -1}) + \dist(P_{j} \cap \Mof_{j}) + \tfrac{1}{t} \dist(P_{j}^\mathrm{R} \cap \Mof_{j -1}) 
	\leq 3t \cdot \OPT_j.
\end{align*}
With Equation~(\ref{eq:BoundAugmPath}) we get~$\abs{\SI_{{j}}} \leq \dist( P_j^\mathrm{L}) + \dist( P_j^\mathrm{R}) \leq 5t \cdot \OPT_j,$
which implies $$\smash{\abs{\I_{{j}}} \leq \bigcup_{k\leq j}\abs{\SI_k} \leq 5t \cdot j \cdot \OPT_{j}.}$$
	
On the other hand, the cost of~$\abs{\I_{i}}$ can be bounded from below by~$\dist_e$, with~$e$ being the edge in~$\Mon$ incident to~$r$, as both ends of~$e$ are contained in the interval.
We then obtain
	$$\frac{\abs{\I_{j}}}{\abs{\I_{i}}} \leq \frac{5t \cdot j \cdot \OPT_{j}}{\dist_e} < 5t \cdot j^3.$$
The last inequality follows from the assumption that~$r$ is not frozen at time~$j$, whereby~$\dist_e > \smash{\tfrac{\OPT_{j}}{j^2}}$.
Recall that $\abs{\I_i} < (1+\eps)^{\level{\I_i}}$ and $(1+\eps)^{\level{\I_j}-1} \leq \abs{\I_j}$.
We conclude
\begin{equation}
	\level{\I_{j}}-\level{\I_{i}} \leq \log_{(1+\eps)}(5 t \cdot j^3) +1 = c\cdot \log j \label{eq:c}\,,
\end{equation}
for some constant $c$. Regarding the second claim, recall that $r$ participates in a recourse step, whenever an interval containing~$r$ opens a new top block (i.e. a new level) while~$r$ is not frozen. Between time~$i$ and time~$j$, this can only happen for intervals $\I_h$ of distinct levels for which $\I_i\subseteq\I_h\subseteq\I_j$. The claim follows.
	\qed
\end{proof}

Lemma~\ref{lem:freeze} together with Observation~\ref{obs:frozen} allows us to bound the total number of recourse actions expended by Algorithm~\ref{alg:general}.
\begin{lemm}\label{lem:recourse}
Algorithm~\ref{alg:general} uses a recourse budget of at most~$O(n \log n)$.
\end{lemm}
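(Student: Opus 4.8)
The plan is to account separately for the two sources of recourse actions: (i) the reassignments triggered directly by recourse steps (Algorithm~\ref{alg:general}, Steps~5--6), and (ii) the reassignments caused by unfreezing (Step~3). By Observation~\ref{obs:ChargeUnfreezingToRecourseSteps}, each request is reassigned at most three times per recourse step it participates in, so it suffices to bound the total number of (request, recourse-step) incidences over the whole execution, i.e.\ $\sum_i \rho_i$ where $\rho_i$ is the number of recourse steps in which $r_i$ participates. Multiplying by the constant $3$ then gives the claimed bound up to a constant factor.

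First I would fix a request $r_i$ and bound $\rho_i$. A recourse step that involves $r_i$ happens only when some interval $\I_h$ with $\I_i \subseteq \I_h$ opens a new top block, which (by Observation~\ref{obs:laminar}, laminarity) forces $\level{\I_h}$ to be a level strictly larger than any previously seen for an interval containing $r_i$; so the number of such steps while $r_i$ is continuously non-frozen is at most the number of distinct levels $\ell$ with $\I_i \subseteq \I_h$, $\level{\I_h}=\ell$, i.e.\ exactly the quantity bounded by Lemma~\ref{lem:freeze}. The subtlety is that $r_i$ need not be continuously non-frozen: it may be frozen and unfrozen several times, and after each unfreezing the Lemma~\ref{lem:freeze} bound restarts. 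Here Observation~\ref{obs:frozen} does the work: an edge frozen at time $i'$ stays frozen until time at least $(i')^2$, so the maximal time horizon $n$ can be "doubled in the exponent" only $O(\log\log n)$ times; hence $r_i$ undergoes at most $O(\log\log n)$ freeze/unfreeze cycles. In each maximal non-frozen window $[i',j']\subseteq[1,n]$, Lemma~\ref{lem:freeze} gives $O(\log j') = O(\log n)$ recourse-step participations, so $\rho_i = O(\log n \cdot \log\log n) = O(\log n)$ after absorbing the doubly-logarithmic factor into the constant (or, more carefully, $\rho_i = O(\log n)$ directly, since a cruder argument bounds the total over all windows by $O(\log n)$ as the windows' right endpoints form a geometric-in-the-exponent sequence). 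Summing over the $n$ requests yields $\sum_i \rho_i = O(n\log n)$, and multiplying by $3$ (Observation~\ref{obs:ChargeUnfreezingToRecourseSteps}) gives $O(n\log n)$ total recourse actions.

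The remaining point to verify is that no recourse actions are created outside of recourse steps and unfreezing events: when $r_i$ arrives it is matched directly as in $\Mon_i$ (Step~7), which costs no reassignment of an earlier request, and freezing an edge (Step~1) performs no reassignment at all — only unfreezing does (Step~3), and by the discussion preceding Observation~\ref{obs:ChargeUnfreezingToRecourseSteps} unfreezing $r$ happens at most once between two consecutive recourse steps involving $r$, so it is subsumed by the factor $3$ already. Therefore every reassignment is attributable to some (request, recourse-step) incidence, and the count above is exhaustive.

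I expect the main obstacle to be the freeze/unfreeze bookkeeping in the second paragraph: one must argue that restarting the Lemma~\ref{lem:freeze} counter after each unfreezing does not blow up the per-request bound beyond $O(\log n)$. The clean way is to observe via Observation~\ref{obs:frozen} that the sequence of freezing times $i_1 < i_2 < \cdots$ of a single request satisfies $i_{k+1} \ge i_k^2$, hence there are only $O(\log\log n)$ of them, and that the windows in between contribute $O(\log n)$ recourse steps in total because $\sum_k \log(i_{k+1}) \le \sum_k 2^{-k}\cdot O(\log n)$-type telescoping keeps the sum at $O(\log n)$; making this telescoping precise (rather than paying the naive product $O(\log n \log\log n)$) is the one place where a little care is needed, though even the naive bound is more than enough for the stated $O(n\log n)$.
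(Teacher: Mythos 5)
Your argument is essentially the paper's: reduce, via Observation~\ref{obs:ChargeUnfreezingToRecourseSteps}, to counting per-request participations in recourse steps; bound each maximal non-frozen window by Lemma~\ref{lem:freeze}; and use Observation~\ref{obs:frozen} to get the squaring relation between consecutive windows. The paper carries out exactly your ``telescoping'' as an induction on the number of windows: writing the non-frozen windows as $[i^U_h,i^F_h]$, Observation~\ref{obs:frozen} gives $(i^F_{k-1})^2\le i^U_k\le i^F_k$, hence $\log i^F_{k-1}\le\tfrac12\log i^F_k$, and the per-request total is at most $2c\log i^F_k\le 2c\log n$ --- the same geometric sum of logarithms you sketch. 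So the core of your proposal is correct and coincides with the paper's proof.

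Two of your side remarks, however, are wrong and should be deleted, since only the telescoping actually delivers the claim: (i) $O(\log n\cdot\log\log n)$ is \emph{not} $O(\log n)$ ``after absorbing the doubly-logarithmic factor into the constant'' --- $\log\log n$ is not a constant; and (ii) the ``naive bound'' of $O(\log n\log\log n)$ per request is \emph{not} ``more than enough'' for the stated bound, as it yields $O(n\log n\log\log n)$ total, which exceeds $O(n\log n)$. Since you do give the correct argument (windows' endpoints satisfy $i_{k+1}\ge i_k^2$, so the logarithms halve and the sum over all windows is at most twice the last term, i.e.\ $O(\log n)$ per request), the proof goes through once those two fallback claims are removed; your bookkeeping that freezing causes no reassignment, arrivals cause none, and unfreezing is charged within the factor $3$ of Observation~\ref{obs:ChargeUnfreezingToRecourseSteps} likewise matches the paper.
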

\begin{proof}
 	Consider a request~$r$.
	By Observation~\ref{obs:ChargeUnfreezingToRecourseSteps}, it suffices to bound the number of recourse steps that $r$ is involved in.
	Let~$[i^U_h,i^F_h]$,~$h=0,1,\ldots ,k$, be maximal intervals of consecutive time points during which $r$ is not frozen, i.e., $r$ is not frozen at any time $i\in[i^U_h,i^F_h]$. 	
	We use induction on~$k$ to show that~$r$ participates in at most~$2c\cdot \log (i^F_{k})$ recourse steps, where~$c$ is the constant from Equation~\eqref{eq:c}.
	The base case,~$k=0$, follows directly from Lemma~\ref{lem:freeze}.	
	For~$k\geq 1$, we have~$(i^F_{k-1})^2 \leq i^U_{k} \leq i^F_{k}$ due to Observation~\ref{obs:frozen}.
	By induction hypothesis, the number of reassignments that involve $r$ in the first~$k-1$ time intervals is at most
	$$ 2c \cdot \log (i^F_{k-1}) \leq 2c \cdot \log \!\Big( \sqrt{i^F_{k}} \; \Big) = c \cdot \log (i^F_{k}).$$
	For the last time interval, we have at most~$c \cdot \log (i^F_{k})$ many such recourse steps by Lemma~\ref{lem:freeze}.
	Since~$i^F_{k} \leq n$, this concludes the proof.
\qed
\end{proof}

Corollary~\ref{cor:comp} and Lemma~\ref{lem:recourse} together imply Theorem~\ref{thm:general}.

\section{Near-Optimality on Alternating Instances}\label{sec:alternatingALG}

For alternating instances, we may assume that requests and servers alternate 
from $-\infty$ to~$\infty$ on the line,  with servers at~$\pm \infty$.
For such instances, an optimum matching matches all requests either to the server directly to their left or all to the right.
Denote these matchings by~$M^\mathrm{L}$ and~$M^\mathrm{R}$ 
and call their edges \emph{minimal}.

We describe a $(1+\eps)$-competitive algorithm for alternating instances that reassigns a request a constant number of times.
In addition to its output~$M$, it maintains
~\Mof and frozen edges~\Mf.
A request is 
frozen when it is reassigned the $k$-th time, for some 
$k$ only depending on \eps.
A frozen request remains matched to its current server and the corresponding edge of~\Mof is added to~\Mf.
Non-frozen requests are matched according to the detour matching~$\Mof \sym \Mf$. 
By design, the recourse budget per request is constant, only the competitive analysis remains.

\smallskip
\noindent
\emph{Notation.} We use a similar interval structure as before and keep the notation. 
Consider intervals~$\I_i = [s_i^\mathrm{L},s_i^\mathrm{R}]$, where~$s_i^\mathrm{L}, s_i^\mathrm{R} \in S$ are the closest free servers on the line to the left and right of~$r_i$ respectively.
Denote by~$P_i^\mathrm{L},P_i^\mathrm{R}$ the alternating paths connecting~$r$ to~$s_i^\mathrm{L}$ and~$s_i^\mathrm{R}$ respectively that have shortest~$t$-net-cost.

When considered as line segments, the augmenting paths~$P_i$ have a laminar structure.
View them as nodes of a forest, where~$P_i$ is a child of the minimal augmenting path that properly contains it, or a root if no such path exists (see Figure~\ref{fig:pathtree}).
The \emph{depth} of a path~$P_i$ is its distance to the root and it determines the number of reassignments of the corresponding request~$r_i$ in~\Mof.

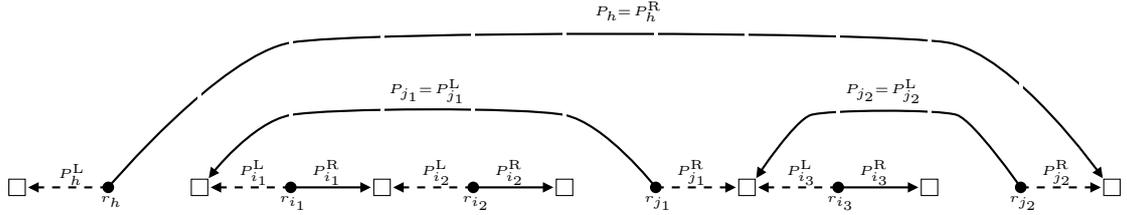
\begin{figure}[tbh]\centering
	\begin{tikzpicture}[scale=1.2,>={Stealth[inset=0pt,length=4pt,angle'=50]}]	
	\foreach \r / \s in {3,5,9}{	
		\draw[augment, ->](\r,0) -- (\r+.88,0);
		\draw[alternative] (\r,0) -- (\r-.88,0);
	}	
	\draw[alternative] (1,0) -- (0.12,0);
	\draw[alternative] (7,0) -- (7.88,0);
	\draw[alternative] (11,0) -- (11.88,0);
	\draw[augment, ->] plot [smooth, tension = .4] coordinates { (7,0) (6,.8) (3,.8) (2.1,0.1)};
	\draw[augment, ->] plot [smooth, tension = .3] coordinates { (11,0) (10.3,.8) (8.7,.8) (8.1,0.1)};
	\draw[augment, ->] plot [smooth, tension = .3] coordinates { (1,0) (3,1.6) (10.2,1.6) (11.9,0.1)};
	\foreach \x  in {2,3,4,5,6,7,8,9,10,11}{	
		\fill[white] (\x-.02,-.3) rectangle (\x+.02,2);
	}	
	\foreach \x / \label in {0/h,1/i_1,2/i_2,3/j_1,4/i_3,5/j_2}{	
		\node[server]() at (2*\x,0) {};
		\node[request]() at (2*\x+1,0) {};
		\node[below]() at (2*\x+1,0) {\tiny~$r_{\label}$};
	}		
	\node[server]() at (12,0) {};
	\foreach \x / \y / \label in {2.6/0/{P_{i_1}^\mathrm{L}},3.4/0/{P_{i_1}^\mathrm{R}},4.6/0/{P_{i_2}^\mathrm{L}},5.4/0/{P_{i_2}^\mathrm{R}},8.6/0/{P_{i_3}^\mathrm{L}},9.4/0/{P_{i_3}^\mathrm{R}},4.5/.9/{P_{j_1}\!\! = \!P_{j_1}^\mathrm{L}},7.4/0/{P_{j_1}^\mathrm{R}},9.5/.9/{P_{j_2}\!\! = \!P_{j_2}^\mathrm{L}},11.4/0/{P_{j_2}^\mathrm{R}},.6/0/{P_{h}^\mathrm{L}},6.7/1.8/{P_{h}\!\! = \!P_{h}^\mathrm{R}}}{	
		\node[above = -.08]() at (\x,\y) {\tiny$\label$};
	}
	\end{tikzpicture}
	\caption{Illustration of a path tree in an alternating instance. Paths not chosen for augmentation are dashed. Servers are depicted as squares and requests as filled circles.}\label{fig:pathtree}
\end{figure}


\begin{lemm}\label{lem:alternating}
\begin{enumerate}
	\item[(i)] Paths~$P_i^\mathrm{L},P_i^\mathrm{R}$ and matching~$\Mof_{i}$ only use minimal edges.
	\item[(ii)] If~$P_i = P_i^{X}\!\!$, for~$X\in\{\mathrm{L},\mathrm{R}\}$, then in the area of~$\I_i$, locally, we have~$\Mof_i = M^\mathrm{X}$.
Specifically, this implies~$\phi_t^{i-1}(P_i) = \phi_t^{\vphantom{i-1}\smash{M^Y}}\!\!(P_i)$ for~$X \neq Y\in\{\mathrm{L},\mathrm{R}\}$.	
	\item[(iii)] If~$P_{j}$ is a child of~$P_{i}$, then~$P_i = P_i^\mathrm{L}$ if and only if~$P_j = P_j^\mathrm{R}$. 
In particular, we have~$\I_j \subseteq P_i$ and~$\I_{j} \cap \Mof_j \cap \Mof_i = \emptyset$.

\end{enumerate}

\end{lemm}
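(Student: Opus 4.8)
\textbf{Proof plan for Lemma~\ref{lem:alternating}.}

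The plan is to prove the three parts in order, each one feeding into the next. For part~(i), I would argue by induction on~$i$ that every edge ever used in~$\Mof$ (and hence in any augmenting path considered by the \talg) is a minimal edge, i.e.\ connects a request to the server immediately to its left or right. On an alternating instance every request in~$\Mof_{i-1}$ is matched minimally by hypothesis, and the free servers between consecutive matched requests are exactly the ``gaps''; an alternating path from~$r_i$ to a free server must therefore traverse a sequence of consecutive request--server pairs, each traversal consisting of one minimal matching edge followed by one minimal non-matching edge. Since both~$P_i^\mathrm{L}$ and~$P_i^\mathrm{R}$ have this form, and $\Mof_i = \Mof_{i-1}\sym P_i$, the claim persists. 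I expect this to be essentially bookkeeping once the alternating structure is set up correctly.

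For part~(ii): suppose $P_i = P_i^\mathrm{R}$ (the case $P_i^\mathrm{L}$ is symmetric). By~(i), $P_i^\mathrm{R}$ runs from~$r_i$ rightward to the nearest free server~$s_i^\mathrm{R}$, alternating matching/non-matching minimal edges; after augmentation every request strictly inside~$\I_i$ that lay on~$P_i$ is shifted to its right-neighbour server, i.e.\ locally $\Mof_i$ agrees with $M^\mathrm{R}$ on the $\I_i$-portion of the line. For the ``specifically'' clause: since locally $\Mof_{i-1}$ and $\Mof_i$ differ only on minimal edges inside~$\I_i$, and $M^Y = M^\mathrm{L}$ consists exactly of the matching edges $\Mof_{i-1}$ uses on~$P_i$ (the edges $P_i \cap \Mof_{i-1}$) while $M^\mathrm{R}$ consists of $P_i \cap \Mof_i$, the two expressions $\phi_t^{i-1}(P_i) = t\cdot c(P_i\cap\Mof_i) - c(P_i\cap\Mof_{i-1})$ and $\phi_t^{M^Y}(P_i) = t\cdot c(P_i\setminus M^Y) - c(P_i\cap M^Y)$ coincide edge-by-edge. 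This is the step where the definition~\eqref{eq:tnetcost} of $t$-net-cost and part~(i) must be combined carefully, but it is still a direct computation.

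Part~(iii) is where I expect the main obstacle. Here $P_j$ is a child of~$P_i$ in the path forest, so as a line segment $P_j \subsetneq P_i$ and no minimal augmenting path lies strictly between them. The claim is that $P_i$ goes left exactly when $P_j$ goes right. The idea: $P_i$ is the minimal augmenting path at time~$i$ that contains the segment eventually occupied by~$P_j$; say $P_i = P_i^\mathrm{L}$, so by~(ii) after time~$i$ the region is locally matched as $M^\mathrm{L}$, meaning each request there points left. At the later time~$j$, the request~$r_j$ sits in this region with its left neighbour occupied (matched as in $M^\mathrm{L}$) — so the nearest \emph{free} server to its left is far away, pushing the short side to the right, forcing $P_j = P_j^\mathrm{R}$; conversely if $P_i = P_i^\mathrm{R}$ the region is locally $M^\mathrm{R}$ and the same argument gives $P_j = P_j^\mathrm{L}$. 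One must check that nothing between times~$i$ and~$j$ re-flips the local orientation, which is exactly where the ``no minimal path strictly between'' property of the child relation is used: any intervening request whose augmenting path changed the orientation would itself be a minimal path nested between $P_j$ and~$P_i$, contradicting childhood. Finally, $\I_j \subseteq P_i$ follows because $\I_j$ is built from search intervals contained in the region controlled by~$P_i$, and $\I_j \cap \Mof_j \cap \Mof_i = \emptyset$ follows from the orientation flip: on $\I_j$, $\Mof_j$ is locally $M^\mathrm{R}$ (if $P_j=P_j^\mathrm{R}$) while $\Mof_i$ is locally $M^\mathrm{L}$, and $M^\mathrm{L}$ and $M^\mathrm{R}$ share no edge, so their intersection is empty. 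The delicate point throughout part~(iii) is making the informal ``locally'' statements precise enough that the nesting of paths translates cleanly into nesting of matching-orientation regions.
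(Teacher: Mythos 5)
There are two genuine gaps, one in part~(i) and a more serious one in part~(iii). For~(i), your assertion that an alternating path from~$r_i$ to a free server ``must therefore traverse a sequence of consecutive request--server pairs'' is not bookkeeping --- it is the whole content of the claim, and it is false for arbitrary alternating paths: the graph is complete bipartite, so a path may use a long chord that skips over servers and requests. What rules this out is precisely the minimality of the $t$-net-cost of~$P_i^\mathrm{L},P_i^\mathrm{R}$, which your plan never invokes in~(i). The paper's argument is an exchange argument: suppose (inductively, with $\Mof_{i-1}$ minimal) that $P_i=P_i^\mathrm{R}$ contains a non-minimal new edge~$(r,s)$, and let~$s'$ be the server immediately to the right of~$r$; if~$s'$ is free, truncating~$P_i$ at~$s'$ has strictly smaller $t$-net-cost (using that the $t$-net-cost of any path to a free server is non-negative), and if~$s'$ is matched it is matched to its neighbouring request~$r'$, so replacing~$(r,s)$ by~$(r,s'),(s',r'),(r',s)$ again decreases the $t$-net-cost --- contradicting the choice of~$P_i$. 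Part~(ii) of your plan is essentially the paper's, except that you only treat the half of~$\I_i$ covered by~$P_i$; for the statement about all of~$\I_i$ one also needs that the other half, covered by~$P_i^\mathrm{Y}$, already agrees locally with~$M^\mathrm{X}$ in~$\Mof_{i-1}$ and is untouched by the augmentation.

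For~(iii) your temporal picture is reversed, and the argument built on it would fail. In the path forest the \emph{child arrives before the parent}: once a path is augmented, no free server remains strictly inside its segment (servers never become free again), so no later augmenting path can be nested strictly inside an earlier one; hence when~$P_j$ is a child of~$P_i$ we have~$j<i$, and your scenario --- $P_i$ augmented first, fixing a local orientation into which~$r_j$ later arrives --- cannot occur. (Your mechanism ``the nearest free server to the left is far away, pushing the short side to the right'' also conflates line distance with the $t$-net-cost comparison that actually selects~$P_j$.) The correct argument runs in the other direction, using~(i) and~(ii): after the child's augmentation, $\I_j$ is locally matched as, say, $M^\mathrm{R}$, so its matching edges have the request to the left of the server; by~(i) a later minimum $t$-net-cost path can traverse such edges only while moving leftward, i.e.\ only a path of type~$P_h^\mathrm{L}$ can cross~$\I_j$, and since~$\I_j$ contains no free server in its interior such a path must traverse all of~$\I_j$. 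The parent~$P_i$ is the first path properly containing~$P_j$, hence it is of type~$P_i^\mathrm{L}$ and satisfies~$\I_j\subseteq P_i$; the identity~$\I_j\cap\Mof_j\cap\Mof_i=\emptyset$ then follows from~(ii). Your worry about ``re-flips between times~$i$ and~$j$'' disappears once the order is corrected.
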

\begin{proof}
	(i): We use induction on~$i$.
	The base case,~$i=1$, is easy.
	Let~$i\geq 2$.
	Without loss of generality~$P_i = P_i^\mathrm{R}$. Assume $P_i$ contains a non-minimal edge~$e$.  
	Let the notation $(s,r)$ for an edge reflect that $s$ is to the left of $r$ on the line.
	Edges~$(s,r)\in P_i$ are contained in~$\Mof_{i-1}$ and by induction hypothesis minimal, so~$e=(r,s)\in\Mof_{i}$.
	Consider the server~$s'$ right of~$r$.
	Since~$e$ is not minimal,~$r < s' < r' < s$, with~$r'$ being the server right of~$s'$.
	If~$s'$ was free, altering $P_i$ to go from~$r$ directly to~$s'$ would yield a lower cost.
	This follows from the fact that the~$t$-net-cost of a path from a request to a free server is always non-negative~\cite{Raghvendra16}.
	If~$s'$ is matched, it must be matched to~$r'$. Therefore, replacing~$e$ by~$(r,s'),(s',r'),(r',s)$ reduces the cost as well, contradict minimality of $P_i$.
	
	(ii): By statement (i),~$P_i^\mathrm{L}$ and~$P_i^\mathrm{R}$ only consist of minimal edges.
	Then~$P_i^\mathrm{L} \cap \Mof_{i-1} = M^\mathrm{R}$ and~$\Mof_{i-1} \cap P_i^\mathrm{R} = M^\mathrm{L}$.
	After augmenting~$\Mof_{i-1}$ along~$P_i$, the matching is flipped.
	
	(iii): By (ii), we know that~\smash{$\Mof_j \cap \I_j = M^\mathrm{X}$}, say~$X=R$, so edges are of the form~$(r,s)$.
	From part (i), only augmenting paths~$P_h=P_h^\mathrm{L}$ can traverse them.
	If this happens, all of~$\I_j$ is traversed as there is no free server in its interior.
	As parent of $P_j$,~$\,P_i$ is the first path to properly contain~$P_j$ and thus~$P_i=P_i^\mathrm{L}$ and in particular~$\I_j \subseteq P_i$.
	The equation~$\I_{j} \cap \Mof_j \cap \Mof_i = \emptyset$ follows directly from (ii).	
	\qed 
\end{proof}
We show that the sum of lengths of augmenting paths of some depth grows exponentially towards the root.
For a path $P_h$, let $|P_h|$ denote the length of the corresponding line segment. Further, let $\mathcal H_k$ be the set of indices of paths at depth $k$ in the induced subtree of $P_h$ with the root $P_h$ at depth $0$.
\begin{lemm}\label{lem:IntervalForest}
Consider a path~$P_h$ and its grandchildren~$P_j$,~$j \in  \mathcal H_2$.
Then
$$ |P_h| \geq \left(2-\tfrac{1}{t}\right)\sum_{j\in \mathcal H_2} |P_j|.$$
\end{lemm}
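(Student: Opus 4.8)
The plan is to relate the length $|P_h|$ of a path to the lengths of its grandchildren by a two-level application of the structural facts in Lemma~\ref{lem:alternating}, using the $t$-net-cost optimality of the chosen augmenting paths. Without loss of generality assume $P_h = P_h^{\mathrm{R}}$, so by Lemma~\ref{lem:alternating}(iii) every child $P_i$ of $P_h$ satisfies $P_i = P_i^{\mathrm{L}}$, and hence every grandchild $P_j$ (a child of some $P_i$) satisfies $P_j = P_j^{\mathrm{R}}$. The key is that for a grandchild $P_j$ there is a sibling path $P_j^{\mathrm{L}}$ of the same $t$-net-cost (by definition of the search interval), and $P_j^{\mathrm{L}}$ reaches the free server to the left of $r_j$ inside the same block. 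The plan is to charge the length $|P_j|$ of each grandchild against a portion of $|P_h|$ that lies between consecutive children, and then to sum.

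First I would compute, for a single child $P_i$ of $P_h$, a lower bound on the part of $|P_h|$ that "covers" $P_i$ in terms of the grandchildren below $P_i$. By Lemma~\ref{lem:alternating}(ii), inside $\I_i$ the matching $\Mof$ locally equals $M^{\mathrm L}$; the grandchildren $P_j$ with parent $P_i$ are the paths $P_j = P_j^{\mathrm R}$ whose line segments lie inside $P_i$, are pairwise disjoint (laminarity of the path forest), and each one $P_j$ starts at a request $r_j$ and its matched server to the right (an $M^{\mathrm L}$-edge flipped). The $t$-net-cost relation~\eqref{eq:tnetcost} together with the fact that $\phi_t^{i-1}(P_j) = \phi_t^{M^{\mathrm L}}(P_j)$ (Lemma~\ref{lem:alternating}(ii), since $P_j = P_j^{\mathrm R}$ and locally $\Mof = M^{\mathrm L}$) lets me write $|P_j| = \dist(P_j \cap \Mof) + \dist(P_j \cap \Mof_{\text{prev}})$ and bound $\dist(P_j \cap \Mof_{\text{prev}})$ by the total length of $P_h$'s edges swept when processing $P_h$. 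More precisely: the edges of $P_h$ inside $\I_i$ that belong to $\Mof_{h-1}$ (the ones that get "paid for" in the net cost, with coefficient $1$) have total length at least $(1 - \tfrac1t)$ times... — here is where the factor $2 - \tfrac1t$ must emerge. I would use the identity $|P_j| = \dist(P_j \setminus \Mof_{j-1}) + \dist(P_j \cap \Mof_{j-1})$ and the relation $t\cdot \dist(P_j\setminus \Mof_{j-1}) - \dist(P_j \cap \Mof_{j-1}) = \phi_t(P_j) \ge 0$, giving $\dist(P_j \cap \Mof_{j-1}) \le t \cdot \dist(P_j \setminus \Mof_{j-1})$, hence $|P_j| \le (1 + \tfrac1t) \cdot \dist(P_j \setminus \Mof_{j-1}) \cdot$ something — I expect the clean route is instead: the "new" edges $P_j \setminus \Mof_{j-1}$ of all grandchildren are disjoint sub-segments, and they are traversed by $P_h$, contributing disjointly to $|P_h|$; meanwhile each grandchild's "old" edges are charged via the $t$-factor. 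Reconciling the two directions is the calculation that produces $2 - \tfrac1t$.

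The summation step is then to observe that, across all grandchildren $P_j$, $j \in \mathcal H_2$, the segments $P_j$ are pairwise disjoint and each is contained in $P_h$, but more is true: by Lemma~\ref{lem:alternating}(iii), $\I_j \cap \Mof_j \cap \Mof_h = \emptyset$, i.e.\ the matching edge of each grandchild is \emph{not} an edge of $P_h \cap \Mof_{h-1}$ and \emph{not} an edge of $\Mof_h$ either, so it must be that $P_h$ crosses this region using different (minimal) edges whose total length is comparable. I would make this precise by splitting $|P_h|$ into the contribution of $P_h \cap \Mof_{h-1}$ and of $P_h \cap \Mof_h$, and matching each against the corresponding contribution of the $P_j$'s, using that all edges are minimal (Lemma~\ref{lem:alternating}(i)) so lengths of crossing edges telescope cleanly along the line.

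The main obstacle I anticipate is getting the exact constant $2 - \tfrac1t$ rather than something weaker like $1$ or $\tfrac1t$: this requires carefully accounting for \emph{both} the $\Mof_{j-1}$-part and the $\tilde\Mof$-part of each grandchild path (as in the two-sided bound derived in the proof of Lemma~\ref{lem:freeze}, where augmenting along $P_j^{\mathrm L}$ versus $P_j^{\mathrm R}$ gave a $3t$ vs $5t$ split), and realizing that the "left half" $P_j^{\mathrm L}$ of each grandchild, although not itself a subsegment of $P_h$, has its length controlled by the same edges of $P_h$ because the free server it targets lies in the same region. Concretely I would bound $|P_j| = |P_j^{\mathrm R}| \le |P_j^{\mathrm L}| + 2\dist(P_j^{\mathrm R}\cap \Mof_{j-1})$-type inequalities and then sum, with the factor $2$ coming from the two sides and the $-\tfrac1t$ correction from the net-cost coefficient; the disjointness of the $P_j$ across $\mathcal H_2$ and their containment in $P_h$ (Lemma~\ref{lem:alternating}(iii)) closes the argument. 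If the direct approach stalls, the fallback is an induction on the structure that pushes the bound down one level at a time, using Lemma~\ref{lem:alternating}(iii) to alternate $\mathrm L$/$\mathrm R$ roles.
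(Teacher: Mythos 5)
Your setup is the same as the paper's (take $P_h = P_h^{\mathrm R}$ without loss of generality, so children are $\mathrm L$-paths and grandchildren are $\mathrm R$-paths by Lemma~\ref{lem:alternating}(iii), with grandchild segments disjoint and contained in the children), but the proof itself is not there: the calculation that produces the constant $2-\tfrac{1}{t}$ is exactly the part you repeatedly defer ("here is where the factor must emerge", "reconciling the two directions is the calculation that produces $2-\tfrac1t$"). Disjointness and containment alone only give $|P_h| \geq \sum_{j\in\mathcal H_2}|P_j|$, i.e.\ the constant $1$; your concrete suggestion of summing inequalities of the shape $|P_j| \le |P_j^{\mathrm L}| + 2\dist(P_j\cap \Mof_{j-1})$ charges each grandchild once against the stretch of $P_h$ it occupies and does not create the extra $(1-\tfrac1t)\sum_j |P_j|$ of length that the lemma claims $P_h$ must additionally contain. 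Also, a small but symptomatic slip: in this section $P_j^{\mathrm L}$ and $P_j^{\mathrm R}$ are the cheapest paths to the nearest free servers on either side and need \emph{not} have equal $t$-net-cost (that was the Section~\ref{sec:generalALG} definition with virtual servers); all one has, and all one needs, is $\phi_t^{j-1}(P_j^{\mathrm L}) \geq \phi_t^{j-1}(P_j^{\mathrm R})$ when $P_j = P_j^{\mathrm R}$ is the chosen path. Likewise, the charging happens \emph{inside} each child interval $\I_i$, not "between consecutive children".

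The missing idea is a two-level propagation of net-cost optimality. The paper starts from $t|P_h| \geq t\sum_i |\I_i| = t\sum_i\bigl(|P_i^{\mathrm L}| + |P_i^{\mathrm R}|\bigr)$, bounds $t|P_i^{\mathrm R}| \geq \phi_t^{i-1}(P_i^{\mathrm R}) \geq \phi_t^{i-1}(P_i^{\mathrm L})$ using the optimality of the chosen child path $P_i = P_i^{\mathrm L}$, and then decomposes $\phi_t^{i-1}(P_i^{\mathrm L})$ along the line into contributions over the grandchild intervals $\I_j$ plus leftover, using Lemma~\ref{lem:alternating}(ii) to identify the local matching ($M^{\mathrm R}$ before the child arrives, $M^{\mathrm L}$ before each grandchild arrives). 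This turns the child's net cost into $\sum_{j}\bigl(\phi_t^{j-1}(P_j^{\mathrm L}) + \psi_t^{j-1}(P_j^{\mathrm R})\bigr)$ minus a leftover term that is absorbed by the $t\cdot|P_i\setminus(\cup_j \I_j)|$ slack from the first step; finally, grandchild-level optimality $\phi_t^{j-1}(P_j^{\mathrm L}) \geq \phi_t^{j-1}(P_j^{\mathrm R})$ and the identity $\phi_t(P)+\psi_t(P) = (t-1)|P|$ convert this into an extra $(t-1)|P_j|$ per grandchild, so that $t|P_h| \geq \sum_j\bigl(t+(t-1)\bigr)|P_j| = (2t-1)\sum_j |P_j|$. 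Without this chain --- in particular without the child-level comparison of $\phi_t^{i-1}(P_i^{\mathrm R})$ against $\phi_t^{i-1}(P_i^{\mathrm L})$ and the $\phi_t/\psi_t$ bookkeeping --- the stated constant is only asserted, not derived, so as written the proposal has a genuine gap.
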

\begin{proof}
Denote by~$P_i$,~$i \in  \mathcal{I}$, the children of~$P_h$ in the path-forest and by~$\mathcal J_i \subseteq \mathcal{J}$ the sets of indices of their respective children.
Without loss of generality~\mbox{$P_h = P_h^\mathrm{R}$}. Lemma~\ref{lem:alternating}, (iii) implies~$P_i = P_i^\mathrm{L}$, and~$P_j = P_j^\mathrm{R}$, for~\mbox{$i \in \mathcal{I},j \in \mathcal{J}$.}
See Figure~\ref{fig:pathtree} for an illustration.
With again Lemma~\ref{lem:alternating}, (iii) and~\mbox{$\phi_t(P_i^\mathrm{R}) \geq \phi_t(P_i^\mathrm{L})$}, we get
\begin{align*}
t \cdot |P_h| &\geq t \cdot  \sum_{i\in\mathcal{I}}|I_i| = t \cdot  \sum_{i\in\mathcal{I}} \big( |P_i^\mathrm{L}| + |P_i^\mathrm{R}| \big) \geq  \sum_{i\in\mathcal{I}} \big(  t \cdot |P_i^\mathrm{L}| + \phi_t^{i-1}(P_i^\mathrm{R}) \big) \\
&\geq t \cdot \sum_{j\in\mathcal{J}}  |P_j^\mathrm{R}| + t \cdot \sum_{i\in\mathcal{I}} |P_i \setminus (\cup_{j \in \mathcal{J}_i} P_j^\mathrm{R})| + \sum_{i\in\mathcal{I}} \phi_t^{i-1}(P_i^\mathrm{L}).
\end{align*}
Using
\begin{align*}
\phi_t^{i-1}(P_i^\mathrm{L}) = \phi_t^{\mathrm{R}}(P_i^\mathrm{L}) &= \sum_{j\in\mathcal{J}_i} \!\!\Big( \phi_t^{\mathrm{R}}(P_j^\mathrm{L}) + \phi_t^{\mathrm{R}}(P_j^\mathrm{R}) \Big) + \phi_t^{\mathrm{R}}(P_i \setminus (\cup_{j \in \mathcal{J}_i} \I_j))\\
&\geq \sum_{j\in\mathcal{J}_i} \!\Big( \phi_t^{j-1}(P_j^\mathrm{L}) + \psi_t^{j-1}(P_j^\mathrm{R}) \Big) - t\cdot | P_i \setminus (\cup_{j \in \mathcal{J}_i} \I_j) |,
\end{align*}
and~$\phi_t(P_j^\mathrm{L}) \geq \phi_t(P_j^\mathrm{R})$, we obtain	
\begin{align*}
t \cdot |P_h| 
&\geq \sum_{j\in\mathcal{J}} \!\Big( t \cdot|P_j^\mathrm{R}| + \phi_t^{j-1}(P_j^\mathrm{R}) + \psi_t^{j-1}(P_j^\mathrm{R}) \Big) \geq (2t -1) \sum_{j\in\mathcal{J}} \cdot|P_j^\mathrm{R}|.
\end{align*}
The last inequality follows from our observation \mbox{$\psi_t(P) + \phi_t(P) = (t-1)\cdot|P|$}.\\
\qed
\end{proof}

\begin{proof}[of Theorem~\ref{thm:alternating}]
We in fact prove a stronger result than in the theorem statement and show that the algorithm described in the beginning of this section is~$(1+\eps)$-competitive while reassigning each request at most $O(\eps^{-(1 + \lambda)})$ times for fixed $\lambda >0$. Consider a path~$P_{h}$.
The intervals~$\I_j$,~$j \in \mathcal H_{2k+2}$, are contained in paths~$P_{j'}$,~$j' \in \mathcal H_{2k+1}$, by Lemma~\ref{lem:alternating} (iii). Lemma~\ref{lem:IntervalForest} implies
\begin{align}
\textstyle \sum_{j\in \mathcal H_{2k+2}} |I_j| \leq \sum_{j'\in \mathcal H_{2k+1}} |P_{j'}| \leq \left(2-\tfrac{1}{t}\right)^{-k} \cdot \sum_{i\in \mathcal H_{1}} |P_{i}|. \label{eq:2k+1Children}
\end{align} 

Raghvendra~\cite{Raghvendra16} showed that the~$t$-net-cost of augmenting paths is always non-negative. 
In particular,~$t \cdot \dist(P_h \cap \Mof_{h}) -  \dist(P_h \cap \Mof_{h-1}) \geq 0$ and, thus,
\begin{align}
|P_{h}| =  \dist(P_h \cap \Mof_{h}) +  \dist(P_h \cap \Mof_{h-1}) \leq (t+1) \cdot \dist(P_h \cap \Mof_{h}). \label{eq:pathCapMof}
\end{align}

Similarly,	
$ \sum_{i\in I} |P_{i}| \leq (t+1) \cdot \sum_{i\in I} \dist(P_i \cap \Mof_{i}) \leq (t+1) \cdot \dist(P_h \cap \Mof_{i})$.
In an interval~$\I_i$, locally,~$\Mof_h=M^{\mathrm{L}}$ if and only if~\mbox{$\Mof_i=M^{\mathrm{R}}$} by Lemma~\ref{lem:alternating} (ii). The above together with~(\ref{eq:2k+1Children}) implies
\begin{align}
\left[\tfrac{1}{t+1}\left(2-\tfrac{1}{t}\right)^{k} -1\right]\cdot \sum_{j\in J} |\I_j| \leq \dist\left(P_h \setminus (\cup_{j \in \mathcal{J}_i} \I_j) \cap M^{\mathrm{X}} \right), \quad X \in \{\mathrm{L},\mathrm{R}\}. \label{eq:chargealternating}
\end{align}

Denote by~$\alpha(k,t)$ the term in square brackets.
When a (minimal) edge 
is frozen, the remaining instance 
is again alternating and
at most one request will 
take a detour due to~$(r,s)$ being frozen.
The additional 
cost is botunded by~$|\I_r|$ and can, via Equation~(\ref{eq:chargealternating}), be charged to non-frozen parts of \Mopt. 
A part $P_h\cap\Mopt$ is charged this way at most~$2k+2$ times before $P_h$ itself is frozen which leads to a competitive factor of~$\smash{(t+\tfrac{2k+2}{\alpha(k,t)})}$.
Setting~$t = 1 + \smash{\tfrac{\eps}{2}}$, and substituting $\alpha(k,t)$ from Equation~(\ref{eq:chargealternating}), we see that this term is at most $1+\eps$ if
$$
	 \tfrac{2}{4+\eps}\big(1+\tfrac{\eps}{2+\eps}\big)^k - \tfrac{\eps}{2} \geq 2k+2.  
$$
Fix some $\lambda > 0$. Assuming $\eps < 1$, and setting $k=\frac{4c}{\eps} \cdot \frac{\eps^{-\lambda}}{\lambda}$ for some constant $c$, this simplifies to
$$
 \tfrac{2}{5}\bigg(\big(1+\tfrac{\eps}{3}\big)^{\frac{4}{\eps}}\bigg)^{c \cdot \frac{\eps^{-\lambda}}{\lambda}} - \tfrac{1}{2} \geq 	2\left( \tfrac{4c}{\eps} \cdot \tfrac{\eps^{-\lambda}}{\lambda}\right)+2 .  
$$ 
Using the fact that $\big(1+\tfrac{\eps}{3}\big)^{\frac{4}{\eps}}\geq e$ for $\eps < 1$ and simplifying further, we obtain
\begin{align}\label{eq:altComp}
e^{\frac{c \cdot \eps^{-\lambda}}{\lambda}} \geq \frac{20c}{\eps} \cdot \frac{\eps^{-\lambda}}{\lambda} +7.
\end{align}

Since the left term is exponential in $\eps^{-\lambda}$, it dominates the polynomial right term and the inequality holds for $\eps$ sufficiently small (dependent on $\lambda$). To remove the dependence of \eps on $\lambda$, we may instead choose $c = c(\lambda)$ sufficiently large. This yields a recourse of $O_\lambda(\eps^{-(1+\lambda)})$ as desired.
 \qed
\end{proof}

As a byproduct, we show for this special class of instances a result in the online setting without recourse. It relates the competitive ratio to the cost metric, i.e., the maximum difference in edge cost for connecting a request to a server. This result compares to the best known competitive ratio of $O(\log n)$ by Raghvendra~\cite{Raghvendra18}.
\begin{theo}\label{thm:alternatingO}
The online \talg\ is $O(\log \Delta)$-competitive for online matching on an alternating line, where~\smash{$\Delta= \max_{r,r'\in R\!,s,s'\in S}\tfrac{\dist(r,s)}{\dist(r'\!,s')}$}.
\end{theo}
\begin{proof}
	Consider an edge~$e_i$ of \Mon and assume~$P_{i} = P_{i}^\mathrm{R}$. 
	By Lemma~\ref{lem:alternating}, (ii), \vspace{-.5 em}
	\begin{align*}
	&t \cdot \dist( P_{i}^\mathrm{L} \cap M^\mathrm{L} ) - \dist( P_{i}^\mathrm{L} \cap M^\mathrm{R} ) = \phi_t^{i-1}(P_{i}^\mathrm{L}) \\
	&\geq \phi_t^{i-1}(P_{i})  = t \cdot \dist( P_{i} \cap M^\mathrm{R} ) - \dist( P_{i} \cap M^\mathrm{L} ).
	\end{align*}
	Therefore,
	$$|P_{i}| = \dist( P_{i} \cap M^\mathrm{L} ) + \dist( P_{i} \cap M^\mathrm{R} ) \leq \left(1+\tfrac{1}{t}\right)\cdot\dist( P_{i} \cap M^\mathrm{L} ) + \dist( P_{i}^\mathrm{L} \cap M^\mathrm{L}).$$
	As in Equation~\ref{eq:pathCapMof}, we obtain 
	$|P_{i}| \leq (t+1) \cdot \dist(P_i \cap M^\mathrm{R})$. Together with the above, it implies that
	$\dist_{e_i} = |P_{i}| \leq \dist(\I_i \cap \Mopt) \cdot \max\{ t+1, 1+\tfrac{1}{t} \}$.
	Since intervals corresponding to the same depth in the path-forest are disjoint, we can bound the lengths of paths which are of same depth by~$\max\{ t+1, 1+\tfrac{1}{t} \}\cdot \OPT$. As there are at most~$2\cdot\log_{\left(2-1/t\right)}\Delta$ levels in total, the theorem's statement follows.
	\qed
\end{proof}

\section{Conclusion} In this paper, we give the first non-trivial results for the min-cost online bipartite matching problem with recourse. (The results were obtained simultaneously with and independently of Gupta et al.~\cite{GuptaKS19} who consider also more general metrics than the line.) We confirm that an average recourse of $O(\log n)$ per request is sufficient to obtain an $O(1)$-competitive matching on the line. It remains open if such a result can be obtained in a non-amortized setting, where the recourse is available only per iteration. Our algorithm is clearly designed for the amortized setting as it buffers online matching decisions and repairs them in batches.

Further, it remains open to show that constant recourse per request is sufficient for maintaining an $O(1)$-competitive matching on the line, as it is the case for the special line with alternating requests. (In our notion, this would be a total number of reassignments of $O(n)$.) This may be very well possible as there is, currently, no lower bound that would rule this out.

Finally, we remind of a major open question in this field: Does there exist an $O(1)$-competitive algorithm for online matching on the line without any recourse?

\bibliographystyle{abbrv}
\bibliography{mybibliography}

\end{document}